\newcommand{\Appendix}[1]{the full version for}
\newcommand{\p}{\bm{p}}
\newcommand{\x}{\bm{x}}
\newcommand{\y}{\bm{y}}
\newcommand{\I}{\bm{I}}
\newcommand{\X}{\bm{X}}
\newcommand{\Y}{\bm{Y}}
\theoremstyle{definition}
\newtheorem{definition}{Definition}[section]
\newtheorem{theorem}[definition]{Theorem}
\newtheorem{corollary}[definition]{Corollary}
\newtheorem{proposition}[definition]{Proposition}
\DeclareMathOperator{\Tr}{Tr}
\title{Are Quantum Circuits Better than Neural Networks at Learning Multi-dimensional Discrete Data? An Investigation into Practical Quantum Circuit Generative Models}
\author{%
  Pengyuan Zhai 
}
\begin{document}
\maketitle

\begin{abstract}
    Are multi-layer parameterized quantum circuits (MPQCs) more expressive than classical neural networks (NNs)? How, why, and in what aspects?  In this work, we survey and develop intuitive insights into the expressive power of MPQCs in relation to classical NNs. We organize available sources into a systematic proof on why MPQCs are able to generate probability distributions that cannot be efficiently simulated classically. We first show that instantaneous quantum polynomial circuits (IQPCs), are unlikely to be simulated classically to within a multiplicative error, and then show that MPQCs efficiently generalize IQPCs. We support the surveyed claims with numerical simulations: with the MPQC as the core architecture, we build different versions of quantum generative models to learn a given multi-dimensional, multi-modal discrete data distribution, and show their superior performances over a classical Generative Adversarial Network (GAN) equipped with the Gumbel Softmax for generating discrete data. In addition, we address practical issues such as how to efficiently train a quantum circuit with only limited samples, how to efficiently calculate the (quantum) gradient, and how to alleviate modal collapse. We propose and experimentally verify an efficient training-and-fine-tuning scheme for lowering the output noise and decreasing modal collapse. As an original contribution, we develop a novel loss function (MCR loss) inspired by an information-theoretical measure -- the coding rate reduction metric, which has a more expressive and geometrically meaningful latent space representations -- beneficial for both model selection and alleviating modal collapse. We derive the gradient of our MCR loss with respect to the circuit parameters under two settings: with the radial basis function (RBF) kernel and with a NN discriminator and conduct experiments to showcase its effectiveness.
\end{abstract}


\section{Introduction}
Whether the parameterized quantum circuits (PQCs) are more powerful than classical neural networks (NNs) in terms of expressiveness and data learning efficiency remains an open area of research. Showing that probability distributions generated by quantum circuits are hard to simulate classically is one crucial aspect of showcasing \textit{quantum supremacy} \cite{boixo_supremacy}, i.e., demonstrating that a programmable quantum device can solve a problem that no classical device can solve in any feasible amount of time. However, \textit{quantum supremacy} requires both theory and engineering-level guarantees. The main practical challenge is thus the high noise levels in near-term (noisy intermediate-scale quantum, NISQ) quantum devices. Therefore, demonstrating the (possible) expressive power advantage on NISQ devices becomes the current focus in the field.

Classical machine learning (ML) relies heavily on the expressive power of the respective model or network. Demonstrating that quantum devices/circuits are more expressive than classical NNs will also bring positive influence to the ML community. Model expressiveness is deeply rooted in the studies of discriminative learning and generative modeling. In this work, we investigate the expressive power of quantum circuits in the generative sense, i.e., how efficiently can a quantum circuit learn the underlying discrete probability distribution of some given observations (training data), and how much better is it than the classical NN, both theoretically and practically?

Generative Adversarial Networks (GANs) \cite{goodfellow2014generative} \cite{cgan} \cite{GAN-LOG} \cite{seqGAN} \cite{VAE-GAN} have been successful in classical generative learning tasks. However, GANs for discrete data have only been modeled heuristically and is an open area of research \cite{Gumbel_GAN}. Quantum systems, on the other hand, enjoy the naturally discrete structure and can be viewed as implicit discrete data generator, thanks to the superposition of states and probabilistic measurement outcomes (Born's rule \cite{Born_rule_german}).

We organize this paper as such: 
\begin{itemize}
    \item In Section \ref{related_work}, we list all relevant sources from two aspects: i. the theoretical proofs of the expressiveness and complexity of quantum circuits and ii. the related ML methods/practices; 
    \item In Section \ref{background_notations}, we review some background on quantum circuits (including Multilayer Parameterized Quantum Circuits (MPQCs) and Instantaneous Quantum Polynomial Time Circuits (IQPCs)) and classical computational complexity.  
    \item In Section \ref{proof_review}, from the surveyed works we organize a systematic proof on why MPQCs (and IQPCs) have superior expressive power than classical NNs, under a computational hardness assumption (the polynomial hierarchy does not collapses to its third level). 
    \item In Section \ref{ML_methods}, we go through relevant quantum and classical ML architectures, loss functions, and loss function gradient derivations, including the novel Maximal Coding Rate Reduction (MCR) metric and how it can be used as a sample-based distance measure between two distributions with a more expressive and geometrically meaningful feature space.
    \item in Section \ref{numerical_sims}, we show that, experimentally, MPQCs outperform classical NNs with the same number of parameters at learning the $2\times 2$ Bars and Stripes (BAS) distribution. We confirm that the statistical two-sample test relying on the Radial Basis Function (RBF) Kernel \cite{kernel_primer} with Maximal Mean Discrepancy (MMD) loss \cite{MMD_gretton} performs poorly with limited sample sizes. Meanwhile, we discover that adversarial training is effective in eliminating invalid image patterns, but suffers from serious modal collapse \cite{lala2018evaluation}. However, first training the quantum generative circuit via an adversarial minimax game against the discriminator and then fine-tuning the circuit parameters by the RBF Kernel against the real data samples works efficiently and effectively with small sample sizes. The MCR loss function also outperforms the traditional non-saturating GAN loss in terms of total variation and modal collapse.
    \item In Section \ref{conclusion}, we conclude and discuss future directions.
\end{itemize}

\section{Related Work}\label{related_work}
\paragraph{Instantaneous quantum polynomial circuits vs. NNs} Instantaneous quantum polynomial circuits (IQPCs) are a specific type of quantum circuits only composed of commuting gates in the $X$ basis. \cite{Bremner_PH} studied the simulability of IPQCs by classical processes and proved that IQPCs can output probability distributions that cannot be efficiently simulated classically.

\paragraph{Expressiveness of Multi-layer
Parameterized Quantum Circuits} Building on top of \cite{Bremner_PH}, authors of \cite{expressive_q} further studied the expressiveness of multi-layer
parameterized quantum circuits (MPQCs). They proved that MPQCs can efficiently generalize IQPCs, namely, each arbitrary circuit component in IQPC can be represented by constant numbers of MPQC blocks. In this work, we will organize the works of \cite{Bremner_PH} and \cite{expressive_q} and develop insights into the theorems with intuitive examples and redo some of the derivations with enhanced clarity.

\paragraph{Generative models} Generative machine learning models (\cite{GAN-LOG}\cite{cgan}\cite{infogan}\cite{Nash-Gan}) have enjoyed much success in recent years. Most noticeably, Generative Adversarial Networks (GANs) have been widely used for generating artificial images, text-to-image generation or image augmentation across areas of science, arts and media. However, besides issues such as being computationally expensive, convergence failure and modal collapse \cite{veegan_mode_collapse} \cite{che2016mode}, discrete-data learning GANs remain heuristically driven and an open area of research (\cite{Fedus2018MaskGANBT},\cite{seqGAN}).

\paragraph{Generative variational quantum circuit} Variational quantum circuits have been explored as a data generator in many studies (\cite{ref14dif}, \cite{ref15dif}, \cite{benedetti-quantum-shallow}). The ``quantum Born machine'' (\cite{dif_born}) exactly uses the inherent probabilistic interpretation of quantum states to output discrete data samples via projective measurement on the qubits. By using a variational quantum circuit parameterized by a set of rotation angles, \textbf{the circuit parameters can be tuned} to manipulate the probability distribution implicitly defined by the output quantum state, which leads to a \textbf{trainable} quantum circuit generator as used in (\cite{dif_born}, \cite{Stein2020QuGANAG}).

\paragraph{Coding Rate Reduction.} Recently, \cite{ReduNet} proposed a novel objective for learning a linear discriminative representation (LDR) for multi-class data, called ``coding rate reduction''. Unlike conventional discriminative methods only trying to predict class labels, the LDR learns the structure of the data, and is more suitable for both discriminative and generative purposes. \cite{ReduNet} maps distributions of the input data on {\em multiple} nonlinear submanifolds to multiple distinctive linear subspaces. Intuitively, this approach generalizes nonlinear PCA \cite{Kramer1991NonlinearPC} to settings where one simultaneously applies \em{multiple nonlinear PCAs} to data on multiple nonlinear submanifolds.

\section{Background and Notations}\label{background_notations}
\subsection{Quantum system and Born's rule} The structure of quantum systems naturally supports the learning of discrete data because, for a given qubit with state $\ket{\psi}=\alpha \ket{0}+\beta\ket{1}$, the state collapses to either $\ket{0}$ or $\ket{1}$ with probabilities $|\alpha|^2$ and $|\beta|^2$ upon measurement, which is a discrete distribution. With increasing numbers of qubits, the expressive power of modeling a discrete distribution increases exponentially, as for $n$ qubits the state space grows to $\ket{b}_1\otimes\ket{b}_2\otimes\dots \otimes \ket{b}_n, \ket{b}_i\in \{\ket{0}, \ket{1}\}$. The probability distribution on this discrete support is implicitly modeled by the $n$-qubit superpositioned quantum state $\ket{\psi}_n = \sum_{x=0}^{2^n-1} \alpha_x \ket{x}$, $\alpha_x\in \mathbb{C}$, where $|\alpha_x|^2$ is the probability that upon measurement, $\ket{\psi}_n$ will collapse to $\ket{x}$ (by Born's rule discussed below), where $x$ is a bit string representing (in base-two) the number between $0$ and $2^n-1$ (for example, $x=000...00, \; 000...01, \; 000...10, \; \dots, \; 111...11$). These $2^n$ bit strings form a orthonormal basis, called \emph{the standard computational basis}. 


The \textbf{Born's rule} \cite{Born_rule_german} is an important postulate of quantum mechanics which gives the probability that a measurement of a quantum system will yield a given result. For the simplest example, one measures the outputs state in the computational basis and produce a classical sample (bit string) by the following probability: $x \sim p_{\theta}(x)=\left|\left\langle x \mid \psi_{\theta}\right\rangle\right|^{2}$.

\subsection{Quantum circuit}
A quantum circuit applies quantum gates to quantum bits (or qubits). Mathematically, quantum gates are unitary matrix operations on a given quantum state vector. The set of single-qubit rotations, phase shift, and the ($2$-qubit) Controlled-NOT (CNOT) gates, can universally compute any function. The phase rotation gate $R_{\phi}$, z-axis rotation gate $R_{Z}(\theta)$, x-axis rotation gate $R_{X}(\gamma)$, and y-axis rotation gate $R_{Y}(\alpha)$ are as follows (where $\phi, \theta, \gamma, \alpha$ are rotation angles):
\begin{equation}
    \begin{gathered}
R_{\phi}=\left(\begin{array}{cc}
1 & 0 \\
0 & e^{i \phi}
\end{array}\right), \quad R_{X}(\gamma)=\left(\begin{array}{cc}
\cos (\gamma / 2) & i \sin (\gamma / 2) \\
i \sin (\gamma / 2) & \cos (\gamma / 2)
\end{array}\right) \\
R_{Z}(\theta)=\left(\begin{array}{cc}
e^{i \theta / 2} & 0 \\
0 & e^{-i \theta / 2}
\end{array}\right), ~~ R_{Y}(\alpha)=\left(\begin{array}{cc}
\cos (\alpha / 2) & \sin (\alpha / 2) \\
-\sin (\alpha / 2) & \cos (\alpha / 2)
\end{array}\right).
\end{gathered}
\end{equation}

Additionally, the CNOT gate flips the target qubit iff the the control qubit is $1$. This creates entanglement between any arbitrary pair of qubits:
$
\mathrm{CNOT}=\left(\begin{array}{llll}
1 & 0 & 0 & 0 \\
0 & 1 & 0 & 0 \\
0 & 0 & 0 & 1 \\
0 & 0 & 1 & 0
\end{array}\right). 
$

Other than the gates shown above, there are many other \textit{universal gat sets}. In this work, we will in particular discuss the set composed of {$H, Z, C Z \text { and } T=R_{\phi}(\phi=\pi / 4)$} which will appear in any instantaneous quantum polynomial (IQP) circuit, because other than the $H$ gate, all other gates are diagonal in the $Z$ basis, and we can thus append an $H$ or $H\otimes H$ (depending on $1$-qubit or $2$-qubit operations) gate before and after these $Z$-diagonal gates to produce gates that are diagonal in the $X$ basis as required by IQP circuits. Note $\mathrm{CZ}=\left(\begin{array}{llll}
1 & 0 & 0 & 0 \\
0 & 1 & 0 & 0 \\
0 & 0 & 1 & 0 \\
0 & 0 & 0 & -1
\end{array}\right)$.

\subsection{Computational Tasks} Classically or quantumly, a computational task involves the input, the process/circuit, and the output (often probabilistic in nature). These three steps elegantly summarize the basic ML tasks: for discriminative tasks, the input is data, the process is the discriminative model (such as NNs), and the output is some probabilistic prediction; for generative models, the input is some latent code, the process is the generative model, and the output is synthesized data.

In this study we use ``process'' and ``circuit'' interchangeably. How does one measure the ``closeness'' or similarity between the outputs from two computational processes? The answer is through the concept of \textit{error}:
\begin{definition}[Bounded error]
Given two computational processes $\mathcal{T}$ and $\mathcal{C}$, let the input be $w$, which is some bit string, and the outputs be $\mathcal{T}(w)$ and $\mathcal{C}(w)$ of the same arbitrary dimension. $\mathcal{C}$ computes $\mathcal{T}$ with \textit{bounded error} if there is a constant error $0<\epsilon<\frac{1}{2}$ such that for all inputs $w\in W$, $ \text{Prob}[\mathcal{C}(w)=\mathcal{T}(w)]\geq 1-\epsilon$. 
\end{definition}

\begin{definition}[Unbounded error]
Given two computational processes $\mathcal{T}$ and $\mathcal{C}$ as defined above. $\mathcal{C}$ computes $\mathcal{T}$ with \textit{unbounded error} for all inputs $w\in W$, $\text{Prob}[\mathcal{C}(w)=\mathcal{T}(w)]> \frac{1}{2}$. 
\end{definition}

In addition, an important computational ask is the \textit{decision task}, in which a circuit $\mathcal{C}$ only outputs a single bit. In ML applications, the binary classification task belongs to this scenario.

\subsection{Probability distribution of circuit output}
Suppose we are given a (classical or quantum) circuit $\mathcal{C}$, which takes a length-$m$ bit string $w$ as input. Associated to $\mathcal{C}$, there is a probability distribution $P_w$ on $m$-bit binary strings (where $m$ is the output dimension of $\mathcal{C}$), which models the output $\mathcal{C}(w)\sim P_w$, fixing an arbitrary input $w$. Note that this probabilistic model covers the case of deterministic outputs (
where $P_w$ puts probability $1$ on some point). For quantum circuits, the probability distribution of the circuit output is characterized by Born's rule \cite{Born_rule_german} of quantum measurement. The input is the standard state $\ket{0}...\ket{0}$ and the output is the probabilistic measurement result on a designated register of output lines in the computational basis. Specifically, the input state $|z\rangle=|0\rangle^{\otimes N}$ is evolved via some unitary operations $\left|\psi_{\theta}\right\rangle=U_{\theta}|z\rangle$. One measures the output state in the computational basis to produce a classical sample (bit string) $x \sim p_{\theta}(x)=\left|\left\langle x \mid \psi_{\theta}\right\rangle\right|^{2}$. This principle is the core inspiration behind the quantum Born Machine \cite{benedetti-quantum-shallow}, which our generative models are based on in this study.

\subsection{Complexity classes}
A complexity class is defined as a type of computational problem with a bounded resource like time or memory. In this study we are primarily concerned with the following complexity classes (all for decision tasks):
\begin{itemize}
    \item BPP, or bounded-error probabilistic polynomial time, describes the classical randomized polynomial time computation with bounded error, i.e., a classical circuit $\mathcal{C}$ that has a designated input portion and takes randomized bits at each run of the computation. 
    \item PP, or probabilistic polynomial time, describes the classical randomized polynomial time computation with unbounded error for the same classical randomized circuit described above.
    \item BQP, or bounded-error quantum polynomial time, describes the quantum polynomial time computation acting on a constant number of lines with bounded error.
\end{itemize}
We made a simple chart (Fig.~\ref{fig:complexity_chart}) to help with visualization.
\begin{figure}
    \centering
    \includegraphics[scale=0.3]{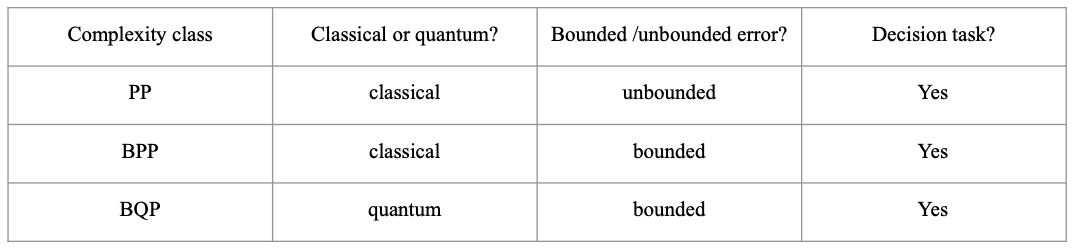}
    \caption{The complexity classes relevant to this study}
    \label{fig:complexity_chart}
\end{figure}

\subsection{Polynomial hierarchy}
In the case when a complexity class $A$ has oracle access to another complexity class $B$, we denote the new class as $A^B$. Basing on this notion, the polynomial hierarchy class (PH) \cite{boaz_complexity} is the union of an infinite series of increasing classes $\Delta_k, k=1,2,...$, where $\Delta_1=\mathrm{P}$, $\Delta_{k+1}=\mathrm{P}^{\mathrm{N} \Delta_k}$, $\mathrm{N} \Delta_{k}$ denotes the non-deterministic class associated to $\Delta_{k}$, in the same way that NP denotes the non-deterministic class associated to $\mathrm{P}$.
It is known that $\mathrm{P} = \mathrm{NP}$ iff the polynomial hierarchy collapses to its first level ($\mathrm{PH} = \mathrm{P}$) \cite{boaz_complexity}, which is unlikely to happen.
We will be concerned with a weaker yet similarly unlikely event: $\mathrm{PH}$ collapses to its third level ($\mathrm{PH} = \Delta_3$).
According to \cite{Bremner_PH}, we have the following fact
\begin{equation}
    \mathrm{P}^{\text {post }-\mathrm{BPP}} \subseteq \mathrm{P}^{\Delta_{3}}=\Delta_{3},
\end{equation}
and by Toda's Theorem \cite{boaz_complexity}:
\begin{equation}
    \mathrm{PH} \subseteq \mathrm{P}^{\mathrm{PP}}. 
\end{equation}

\subsection{Multilayer Parameterized Quantum (MPQ) Circuits}
MPQ circuits, or MPQCs, are multi-layer circuits, where each layer (or block) is composed of single-bit rotation unitaries and entanglement operations through CNOTs. The same block architecture is repeated $L$ times where $L$ is the total number of layer/depth of the circuit. Concretely, at the $l$-th layer, the rotation operation for qubit $j$ is
\[U\left(\theta_{l}^{j}\right)=R_{z}\left(\theta_{l}^{j, 1}\right) R_{y}\left(\theta_{l}^{j, 2}\right)R_{z}\left(\theta_{l}^{j, 3}\right) R_{\phi}\left(\phi \right) R_{x}\left(\theta_{l}^{j, 4}\right) R_{z}\left(\theta_{l}^{j, 5}\right) R_{x}\left(\theta_{l}^{j, 6}\right),\]
parameterized by the rotation angles $\{\theta_l^{j,1}, \theta_l^{j,2}, \theta_l^{j,3}, \theta_l^{j,4}, \theta_l^{j,5}, \theta_l^{j,6}, \phi_l^{j}\}$, where $R_{m}(\theta) \equiv \exp \left(\frac{-i \theta \sigma_{m}}{2}\right)$ is the unitary for rotating around axis $m\in \{z, x, y\}$ at an angle of $\theta$, and $R_{\phi}(\phi)$ is the phase rotation gate by an angle of $\phi$. The entanglement through CNOT does not require parameters, and different studies have explored different connectivities (which qubit is connected to which other qubit through CNOT) besides pair-wise full entanglement \cite{dif_born}. We will not overly discuss the connectivity choices here.

\subsection{Instantaneous Quantum Polynomial Time (IQP) Circuits}
IQP circuits, or IQPCs, are quantum circuits containing only commuting gates (in the $X=\{\frac{\ket{0}\pm \ket{1}}{\sqrt{2}}\}$ basis). The input is the standard $
\ket{0}\ket{0}\dots\ket{0}$ state, and the output is based on the measurement results in the computational basis.

Due to the commuting property, the gates in an IQPC can be applied simultaneously to respective qubits. Conventionally, previous works also represent IQPCs in terms of the Hadamard ($H$) gate 
and gates diagonal in the $Z$ basis (the standard computational basis): define the original IQPC as $U_\theta \ket{00\dots0}$, where $U_\theta=U_1 U_2\dots U_M$ is composed of multiple diagonal gates in the $X$ basis acting on all $N$ qubits. Then, since $HH = I$, we have
$U_\theta \ket{00\dots0}=H^{\otimes N} H^{\otimes N} U_\theta H^{\otimes N} H^{\otimes N} \ket{00\dots0}=H^{\otimes N}\underbrace{H^{\otimes N} U_1 H^{\otimes N}}_{V_1}\underbrace{H^{\otimes N} U_2 H^{\otimes N}}_{V_2}\dots \underbrace{H^{\otimes N} U_M H^{\otimes N}}_{V_M} H^{\otimes N} \ket{00\dots0}$,
where $V_1,\dots V_N$ are gates diagonal in the $Z$ basis.

\subsection{Post-selected IQP Complexity Class}
An important analytical tool used by \cite{Bremner_PH} is (classical or quantum) \emph{post-selected} circuits.  In addition to a specified register of output lines $\mathcal{O}$, a post-selected circuit has a further specified register of post-selection lines $\mathcal{P}$, disjoint with $\mathcal{O}$.  The sampling procedure of a post-selected circuit is described by the conditional distribution of $\mathcal{O}$ given $\mathcal{P} = 00...0$:
\begin{equation}
    \operatorname{Prob}[\mathcal{O}=x \mid \mathcal{P}=00 \ldots 0]=\frac{\operatorname{Prob}[\mathcal{O}=x \, \&\, \mathcal{P}=00 \ldots 0]}{\operatorname{Prob}[\mathcal{P}=00 \ldots 0]},
\end{equation}
as opposed to sampling from the un-conditional distribution $\operatorname{Prob}[\mathcal{O}=x]=\left|\left\langle x \mid \psi_{\theta}\right\rangle\right|^{2}$.
The conditional sampling is done classically, by simply counting the number of occurrences of outcome $\mathcal{O}=x$ among all observations where $\mathcal{P}=00\dots0$. We assume here that $\operatorname{Prob}[\mathcal{P}=00 \ldots 0]\neq 0$. Please note that this is a purely analytical tool, as the actual sampling procedure might be very inefficient if $\operatorname{Prob}[\mathcal{P}=00 \ldots 0]\approx 0$. The authors in \cite{Bremner_PH} established post-selected Instantaneous Quantum Polynomial Time (post-IQP) as a complexity class, along with post-selected BQP and post-selected BPP by the following definition:
\begin{definition}[\cite{Bremner_PH}]
\label{def: l_post_IQP}
A  language $L$ (a subset of bit strings of length $m$) is in the class post-IQP (resp. post-BQP or post-BPP) iff there is an error tolerance $0<\epsilon<\frac{1}{2}$ and a family $\{\mathcal{C}_w\}$ of post-selected IQP (resp. quantum or randomized classical) circuits\footnote{\cite{Bremner_PH} further requires $\{\mathcal C_w\}$ to be a ``uniform'' family, which means that the mapping from $w$ to the description of the circuit $\mathcal C_w$ can be classically computed in polynomial time.  We omit the term ``uniform'' for simplicity. } with a specified single line output register $\mathcal{O}_{w}$ 
and a specified (generally $O(\text{poly}(n))$-line) post-selection register $\mathcal{P}_{w}$ such that:
(i) if $w \in L$ then $\operatorname{Prob}\left[\mathcal{O}_{w}=1 \mid \mathcal{P}_{w}=00 \ldots 0\right] \geq 1-\epsilon$ and
(ii) if $w \notin L$ then $\operatorname{Prob}\left[\mathcal{O}_{w}=0 \mid \mathcal{P}_{w}=00 \ldots 0\right] \geq 1-\epsilon$.
\end{definition}

\subsection{Classical Simulability}
Given any process $\mathcal{T}$ (quantum or classical), how easy/hard is it to simulate this output probabilities of $\mathcal{T}$ by another classical process? \textit{Simulability} is a notion about the efficiency and accuracy of classical simulations. Given a circuit $\mathcal{C}_w$ that takes $w$ as input (a length-$n$ bitstring) and $P_w$ is the output probability distribution:
\begin{itemize}
    \item Strong simulability: a circuit is strongly simulable if any output probability in $P_w$ and any marginal probability of $P_w$ can be computed to $m$ digits of precision in classical poly$(n, m)$ time.
    \item $\mathcal{C}_w$ is weakly simulable if its output distribution $P_w$ can be \textit{sampled} by purely classical means in poly$(n)$ time.
\end{itemize}
Then, a circuit is weakly simulable with multiplicative error $c \geq 1$ if there is a distribution $R_w$ (on the same sample space as $P_w$) such that $R_w$ can be sampled in classical poly$(n)$ time and for all $x$ and $w$ we have
    \begin{equation}
        \frac{1}{c} \operatorname{Prob}\left[P_{w}=x\right] \leq \operatorname{Prob}\left[R_{w}=x\right] \leq c \operatorname{Prob}\left[P_{w}=x\right]
    \end{equation}

\section{Proof Review: MPQCs are more expressive than classical NNs}\label{proof_review}
\paragraph{Overview} The proof is organized based on proofs from two surveyed works, \cite{Bremner_PH} and \cite{expressive_q} with the following logic: the output probability distributions generated by IQPCs cannot be simulated to within multiplicative error classically, under a classical computational hardness assumption (Part 1); the Instantaneous Quantum Polynomial Time Circuits (IQPC) can be efficiently substituted by the  Multi-layer
Parameterized Quantum Circuits (MPQCs) (Part 2). Therefore, MPQCs can generate probability distributions that cannot be efficiently simulated classically, and hence is more expressive than classical processes (including NNs). \textbf{Rather than merely repeating the original proofs, we also give intuitive examples to enhance clarity}.

We first define relative ``expressiveness'' as follows: if process $A$ can output a probability distribution that $B$ cannot     ``efficiently'' simulate/sample, we say that $A$ is more expressive than $B$, where we define ``efficiency'' as ``weakly simulable to within some \textit{multiplicative error}'':

\subsection{Part 1: IQP circuits cannot be efficiently simulated by classical circuits, unless $\mathrm{PH} = \Delta_3$}
\begin{theorem}[\cite{Bremner_PH}] \label{thm:post-IQP=PP}
\emph{post-IQP = post-BQP = PP.} 
\end{theorem}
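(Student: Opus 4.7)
The plan is to prove the two equalities separately: first \emph{post-IQP} = \emph{post-BQP}, and then \emph{post-BQP} = PP. The second equality I would simply invoke as Aaronson's theorem, citing it as a known result: its forward direction writes output amplitudes as sums evaluable in GapP and shows that conditioning on a post-selection event preserves membership in PP; its reverse direction shows that counting problems in PP can be attacked by a quantum algorithm that uses post-selection to amplify exponentially small amplitudes. I would only sketch this at the level of citation, since the heart of the theorem for the purposes of this paper is the equality between post-IQP and post-BQP.

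For \emph{post-IQP} $\subseteq$ \emph{post-BQP}, the inclusion is immediate: an IQP circuit is a restricted form of quantum circuit, and the post-selection register and output register transfer without modification, so any post-IQP family of circuits is automatically a post-BQP family with the same $\epsilon$. The real content lies in the reverse direction, \emph{post-BQP} $\subseteq$ \emph{post-IQP}, which is where I expect the main obstacle to sit. The plan is to use gate teleportation: take any post-BQP circuit built from a universal gate set (say $H$, CNOT, $T$), and replace each non-IQP gate by a teleportation gadget that consumes a fixed ancillary ``magic state'' (e.g.\ $T|+\rangle$ for the $T$ gate), applies an entangling operation that can be expressed as $H^{\otimes n}$ conjugating a $Z$-diagonal gate, and then measures in the computational basis. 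Ordinarily, teleportation requires classical feedforward to apply a Pauli correction depending on the measurement outcome, but post-selection lets us simply condition on the ``no correction needed'' outcomes and thereby eliminate all adaptivity.

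After this replacement, every gate in the circuit is either a layer of $H$'s or a gate diagonal in $Z$, and by carefully pushing the $H$ layers to the outside using $HH = I$ and the IQP sandwich form $H^{\otimes n} V H^{\otimes n}$ with $V$ diagonal in $Z$ (exactly the presentation already discussed in the IQPC background subsection), the whole computation is rewritten as an IQP circuit with a larger post-selection register combining the original post-selection lines and the new teleportation ancillae. The main obstacle will be bookkeeping: I need to verify that (i) the added post-selection events occur with probability bounded below by $1/\text{poly}$ so that the conditional distribution is well defined and efficiently describable, (ii) the error bound $\epsilon$ from Definition~\ref{def: l_post_IQP} is preserved (or can be re-amplified) under the reduction, and (iii) the uniformity of the family is maintained, i.e.\ the circuit description for input $w$ is still classically computable in polynomial time. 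Once those verifications are done, combining the two inclusions gives \emph{post-IQP} = \emph{post-BQP}, and chaining with Aaronson's theorem yields the stated three-way equality.
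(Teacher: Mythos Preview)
Your high-level strategy matches the paper's: cite Aaronson for post-BQP $=$ PP, note post-IQP $\subseteq$ post-BQP trivially, and prove the reverse inclusion by replacing problematic gates with teleportation gadgets whose Pauli byproducts are absorbed into the post-selection register. The confusion is in \emph{which} gates are problematic. The paper works over the universal set $\{H, Z, CZ, T\}$, in which every gate except $H$ is already $Z$-diagonal; consequently the only obstruction to IQP form is the \emph{intermediate} Hadamards, and a single gadget --- the Hadamard gadget of Fig.~\ref{fig:H_gadget} --- handles the entire reduction. Your example of teleporting $T$ via the magic state $T\ket{+}$ is backwards for this purpose: $T$ is $Z$-diagonal and needs no treatment, whereas in your chosen set $\{H,\mathrm{CNOT},T\}$ it is $\mathrm{CNOT}$ (not $Z$-diagonal) and the interior $H$'s that require work. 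Switching to the paper's gate set collapses the whole argument to repeated application of one gadget, after which no ``pushing $H$ layers to the outside'' is needed beyond the bookend $H^{\otimes N}$ already present.

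Your bookkeeping concern (i) is also stricter than the definition demands: Definition~\ref{def: l_post_IQP} only requires $\operatorname{Prob}[\mathcal{P}_w=0\ldots 0]\neq 0$, not a $1/\mathrm{poly}$ lower bound, and the paper explicitly remarks that post-selection is a purely analytical device that may be sampling-inefficient. So no lower-bound verification is needed. Concerns (ii) and (iii) take care of themselves once you observe that each Hadamard gadget adds exactly one ancilla and one post-selected bit while reproducing the conditional output distribution \emph{exactly}; the error parameter $\epsilon$ is therefore unchanged, and uniformity is preserved because the transformation is a local, syntactic rewrite of the circuit description.
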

\begin{proof}
It has been proved by \cite{aaronson_quantum_2005} that post-BQP = PP.  So we only need to prove post-IQP = post-BQP. 
First note that $\text{post-IQP} \subseteq \text{post-BQP}$, because BQP does not put any restriction on the quantum circuit's architecture. Thus we prove the reverse inclusion, $\text{post-BQP} \subseteq \text{post-IQP}$. This is done by equivalently transforming the circuit architecture. Without loss of generality, assume that we use the following universal set of gates $H, Z, C Z \text { and } P=e^{i \frac{\pi}{8} Z}$ (except for $H$ they are all diagonal in the $Z$ basis). We then show that a post-selected BQP circuit using this universal gate set can always be reduced to post-IQP with the same output probability distribution as the original post-BQP circuit.

We use the ``Hadamard gadget'' trick, as illustrated in Fig \ref{fig:H_gadget}. Using the $H$ gate in the original circuit as an incision point, one can teleport the state of the qubit on the top line to the bottom line via a $CZ$ and two $H$ gates. Because $HH=I$, for each line we can add $HH$ in arbitrary positions, and substitute one of the $H$'s by the ``Hadamard gadget''. 
\textbf{We hereby show an intuitive example} of a post-selected BQP circuit with three qubits and three unitaries $U_1$, $U_2$, and $U_3$, which are composed of the gates from the given universal set ($H, Z, C Z \text { and } P=e^{i \frac{\pi}{8} Z}$). The last line is post-selected in the end:\\

\hspace{5em} 
\Qcircuit @C=1em @R=.7em {
  \lstick{\ket{0}} & \multigate{2}{U_1} & \multigate{2}{U_2}  & \multigate{2}{U_3} & \meter     \qw \\
  \lstick{\ket{0}} & \ghost{U_1} & \ghost{U_2} & \ghost{U_3} & \meter    \qw \\
  \lstick{\ket{0}} & \ghost{U_1} &  \ghost{U_2} & \ghost{U_3} & \qw & \lstick{\bra{0}}\\
}

We first introduce $HH=I$'s at the beginning and end of each line:

\hspace{5em} 
\Qcircuit @C=1em @R=.7em {
  \lstick{\ket{0}} & \gate{H} & \gate{H} & \multigate{2}{U_1} & \multigate{2}{U_2}  & \multigate{2}{U_3} & \gate{H} & \gate{H} & \meter     \qw \\
  \lstick{\ket{0}} & \gate{H} & \gate{H} & \ghost{U_1} & \ghost{U_2} & \ghost{U_3} & \gate{H} & \gate{H} & \meter    \qw \\
  \lstick{\ket{0}} & \gate{H} & \gate{H} & \ghost{U_1} &  \ghost{U_2} & \ghost{U_3} & \gate{H} & \gate{H} & \qw &\lstick{\bra{0}} \\
}

For intermediate $H$ gates (ones that are not on both ends), we apply the ``Hadamard gadget''. For example, we first transform on the second column of $H$ gates, teleporting the states to three new lines:

\hspace{5em} 
\Qcircuit @C=1em @R=.7em {
  \lstick{\ket{0}} & \gate{H} & \ctrl{3}  & \qw & \qw & \qw &  \qw & \qw & \qw & \gate{H} & \qw & \lstick{\bra{0}}\\
  \lstick{\ket{0}} & \gate{H} & \qw & \ctrl{3} & \qw & \qw & \qw & \qw & \qw & \gate{H} & \qw & \lstick{\bra{0}}\\
  \lstick{\ket{0}} & \gate{H} & \qw & \qw & \ctrl{3} & \qw & \qw  & \qw & \qw & \gate{H} & \qw &\lstick{\bra{0}}\\
  \lstick{\ket{0}} & \gate{H} & \gate{Z} & \qw & \qw & \multigate{2}{U_1} & \multigate{2}{U_2}  & \multigate{2}{U_3} & \gate{H} & \gate{H} & \meter\\
  \lstick{\ket{0}} & \gate{H} & \qw & \gate{Z} & \qw & \ghost{U_1} & \ghost{U_2}  & \ghost{U_3} & \gate{H} & \gate{H} & \meter\\
  \lstick{\ket{0}} & \gate{H} & \qw & \qw & \gate{Z} & \ghost{U_1} & \ghost{U_2}  & \ghost{U_3} & \gate{H} & \gate{H} & \qw & \lstick{\bra{0}} \\
}

and then transforming the the second to last column of $H$ gates:

\hspace{5em} 
\Qcircuit @C=1em @R=.7em {
  \lstick{\ket{0}} & \gate{H} & \ctrl{3}  & \qw & \qw & \qw &  \qw & \qw & \qw & \qw &  \qw & \gate{H} & \qw & \lstick{\bra{0}}\\
  \lstick{\ket{0}} & \gate{H} & \qw & \ctrl{3} & \qw & \qw & \qw & \qw & \qw & \qw &  \qw & \gate{H} & \qw & \lstick{\bra{0}}\\
  \lstick{\ket{0}} & \gate{H} & \qw & \qw & \ctrl{3} & \qw & \qw  & \qw & \qw & \qw  & \qw & \gate{H} & \qw &\lstick{\bra{0}}\\
  \lstick{\ket{0}} & \gate{H} & \gate{Z} & \qw & \qw & \multigate{2}{U_1} & \multigate{2}{U_2}  & \multigate{2}{U_3} & \ctrl{3} & \qw & \qw & \gate{H} & \qw & \lstick{\bra{0}}\\
  \lstick{\ket{0}} & \gate{H} & \qw & \gate{Z} & \qw & \ghost{U_1} & \ghost{U_2}  & \ghost{U_3} & \qw & \ctrl{3} & \qw & \gate{H} & \qw & \lstick{\bra{0}}\\
  \lstick{\ket{0}} & \gate{H} & \qw & \qw & \gate{Z} & \ghost{U_1} & \ghost{U_2}  & \ghost{U_3} & \qw & \qw & \ctrl{3} & \gate{H} & \qw & \lstick{\bra{0}}\\
  \lstick{\ket{0}} & \gate{H} & \qw & \qw & \qw & \qw & \qw  & \qw & \gate{Z} & \qw & \qw  &\gate{H} & \meter\\
  \lstick{\ket{0}} & \gate{H} & \qw & \qw & \qw & \qw & \qw & \qw  & \qw & \gate{Z} & \qw & \gate{H} & \meter\\
  \lstick{\ket{0}} & \gate{H} & \qw & \qw & \qw & \qw & \qw  & \qw &\qw & \qw & \gate{Z} & \gate{H} & \qw & \lstick{\bra{0}}\\
}\\

Thus for any $H$ gate inside $U_1$, $U_2$, and $U_3$, we can repeat the same steps and eliminate it by adding a new ancillary line. Because the transformed circuit has $H$ gates on each line at both ends, and only contains gates diagonal in the $Z$ basis in the middle, and have post-selected lines (one for each reduced intermediate $H$ gate) in the end, any post-selected BQP circuit can be reduced to a post-selected IQP circuit. Therefore $\text { post-BQP } \subseteq \text {post-IQP }$. 
\end{proof} 


\begin{theorem}[rephrase of \cite{Bremner_PH}]\label{thm:theorem-2}
\textit{If the output probability distributions generated by 
IQP circuits could be weakly classically simulated to within multiplicative error $1 \leq c<\sqrt{2}$ then post-IQP $\subseteq$ post-BPP.} 
\end{theorem}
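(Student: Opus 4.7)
The plan is: starting from $L \in$ post-IQP, I amplify the witness circuit so its quantum conditional error is exponentially small, feed the resulting (still-IQP) circuit into the hypothesized classical multiplicative simulator, and post-select the classical sample to obtain a post-BPP witness for $L$. Concretely, let $\{\mathcal{C}_w\}$ be a post-IQP family for $L$ with output bit $\mathcal{O}_w$, post-selection register $\mathcal{P}_w$, and error $\epsilon_0 < 1/2$ as in Definition \ref{def: l_post_IQP}. I build an amplified IQP circuit $\mathcal{C}'_w$ that runs $k$ independent parallel copies of $\mathcal{C}_w$ on disjoint blocks of qubits, where $k$ is a constant to be pinned down later. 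Since IQP is trivially closed under tensor products (all gates remain simultaneously diagonal in the $X$ basis on disjoint registers), $\mathcal{C}'_w$ is itself an IQP circuit, with joint output distribution $P_w$ over the $k$ output bits $\mathcal{O}_w^{(1)},\ldots,\mathcal{O}_w^{(k)}$ and $k$ post-selection blocks $\mathcal{P}_w^{(1)},\ldots,\mathcal{P}_w^{(k)}$.

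The hypothesis then supplies a classically polynomial-time samplable distribution $R_w$ obeying $\frac{1}{c} P_w(z) \leq R_w(z) \leq c\, P_w(z)$ for every outcome $z$. The candidate post-BPP procedure samples $R_w$, uses the concatenation of the $k$ post-selection blocks as its post-selection register, and outputs the majority vote of the $k$ output bits. The one small lemma I need is that multiplicative $c$-closeness is preserved under conditioning, up to a factor $c^2$: letting $E$ denote the event that every $\mathcal{P}_w^{(i)}$ equals $00\ldots 0$, for every event $A \subseteq E$ the numerator $R_w(A)$ and denominator $R_w(E)$ of $R_w(A)/R_w(E)$ each lie in a multiplicative $c$-window around the corresponding $P_w$-value, so
\begin{equation}
\tfrac{1}{c^{2}}\, P_w(A \mid E) \;\leq\; R_w(A \mid E) \;\leq\; c^{2}\, P_w(A \mid E).
\end{equation}
Under $P_w(\,\cdot \mid E)$ the $k$ output bits are independent (disjoint qubit blocks give a factorized joint distribution), each equal to the correct $L$-answer with probability at least $1-\epsilon_0 > 1/2$, so a standard Chernoff bound gives wrong-majority probability at most $2^{-\Omega(k)}$ under $P_w(\,\cdot \mid E)$. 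The lemma then upgrades this to wrong-majority probability at most $c^{2}\cdot 2^{-\Omega(k)}$ under $R_w(\,\cdot \mid E)$, which is exactly the classical conditional error of the candidate post-BPP machine.

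The hinge of the argument, and the place where the quantitative hypothesis shows up, is precisely this factor $c^{2}$. For the classical post-selected sampler to witness membership in post-BPP I need its conditional acceptance probability to strictly exceed $1/2$, which demands $c^{2}\cdot 2^{-\Omega(k)} < 1/2$; this is achievable with a constant $k$ (depending only on $\epsilon_0$ and $c$) \emph{precisely} when $c^{2} < 2$, i.e.\ $c < \sqrt{2}$, matching the hypothesis exactly. So the main obstacle is not any deep structural fact but rather correctly tracking how the multiplicative error interacts with post-selection and amplification, and recognizing that the threshold $\sqrt{2}$ is forced by the fact that post-selection squares the distortion constant. Everything else — uniform polynomial-size construction of $\mathcal{C}'_w$ from $\mathcal{C}_w$, the fact that the independence needed for Chernoff holds under $P_w$ rather than $R_w$ and is therefore unaffected by the classical simulator, and repackaging the sampler plus majority gate into the formal post-BPP template of Definition \ref{def: l_post_IQP} — is routine.
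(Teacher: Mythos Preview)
Your overall strategy is sound and in fact establishes more than the theorem asks, but your account of where the $\sqrt{2}$ threshold comes from is wrong. The condition you derive is $c^{2}\cdot 2^{-\Omega(k)} < 1/2$, with $k$ a free constant that you are allowed to pick after seeing $c$ and $\epsilon_0$. For \emph{any} fixed $c\geq 1$, choosing $k$ large enough drives $c^{2}\cdot 2^{-\Omega(k)}$ below $1/2$; nothing forces $c^{2}<2$. Your sentence ``this is achievable with a constant $k$ \ldots\ \emph{precisely} when $c^{2}<2$'' is simply false. So your argument, taken at face value, proves the stronger statement that weak classical simulation of IQP to within \emph{any} constant multiplicative error implies $\text{post-IQP}\subseteq\text{post-BPP}$. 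That is not a bug in the proof --- it still covers the stated range $1\leq c<\sqrt{2}$ --- but your explanation of the ``hinge'' is incorrect, and you should not present $\sqrt{2}$ as arising from your construction.

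By contrast, the paper follows Bremner--Jozsa--Shepherd directly with \emph{no} amplification: it applies the multiplicative simulator to a single post-IQP witness $\mathcal{C}_w$ with gap parameter $\delta$, derives $\frac{1}{c^{2}}S_w(x)\leq \tilde{S}_w(x)\leq c^{2}S_w(x)$ for the classical conditional output probability, and then asks when $\tilde{S}_w$ still lands on the correct side of $1/2$. That yields the constraint $c^{2}<1+2\delta$, and since one only knows $\delta<1/2$, this gives $c<\sqrt{2}$. The $\sqrt{2}$ genuinely appears there because the per-instance gap $\delta$ is not amplified before the simulator is applied. Your route sidesteps this by amplifying inside IQP first (legitimately, since IQP is closed under disjoint tensor products) and pushing the conditional error to $2^{-\Omega(k)}$ before the $c^{2}$ blow-up hits. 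What your approach buys is a stronger conclusion and a cleaner use of Chernoff; what the paper's approach buys is that the bound $\sqrt{2}$ in the theorem statement is actually explained by the proof. If you keep the amplification route, drop the erroneous ``precisely when $c^{2}<2$'' claim and instead note that your argument in fact removes the restriction to $c<\sqrt{2}$.
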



The original proof is based on the analysis of bounded error without too much detail. \textbf{We expand the derivations with more clarity}:

\begin{proof} 
Let $L \in$ post-IQP be any language decided with bounded error by post-selecting some IQP circuit $\mathcal C_{w}$ with (single line) output registers $\mathcal{O}_{w}$ and post-selection registers $\mathcal{P}_{w}$. Then
$$
S_{w}(x)=\frac{\operatorname{Prob}\left[\mathcal{O}_{w}=x \,\&\, \mathcal{P}_{w}=0 \ldots 0\right]}{\operatorname{Prob}\left[\mathcal{P}_{w}=0 \ldots 0\right]},
$$
is the conditional probability of observing $x$. The authors define a variant of the bounded error condition that decides if a language is in post-IQP (compared with \ref{def: l_post_IQP}) (for some $0<\delta<\frac{1}{2}$):
\begin{equation}
\label{eq: bounded_error_1/2}
    \begin{split}
        &\text{if } w \in L  \text{ then } S_{w}(1) \geq \frac{1}{2}+\delta\\
        &\text{if } w \notin L \text{ then } S_{w}(1) \leq \frac{1}{2}-\delta.
    \end{split}
\end{equation}
Then, we apply the definition of multiplicative error: let $\mathcal{Y}_w$ be any subset of registers/lines of $\mathcal{C}_w$, if there is another classical randomized circuit $\tilde{\mathcal{C}}_w$ (thus of BPP) that can produce its output distribution to within a multiplicative error $c\geq 1$ relative to $\mathcal{C}_w$:
\begin{equation}
\label{eq: multiplicative_error}
    \frac{1}{c} \operatorname{Prob}\left[\mathcal{Y}_{w}=y_{1} \ldots y_{m}\right] \leq \operatorname{Prob}\left[\tilde{\mathcal{Y}}_{w}=y_{1} \ldots y_{m}\right] \leq c \operatorname{Prob}\left[\mathcal{Y}_{w}=y_{1} \ldots y_{m}\right].
\end{equation}

Additionally, define $\tilde{\mathcal{O}}_{w}$ and $\tilde{\mathcal{P}}_{w}$ as the output and post-selection registers of $\tilde{\mathcal{C}}_{w}$, we have a similar expression for:
\begin{equation}
    \tilde{S}_{w}(x)=\frac{\operatorname{Prob}[\tilde{\mathcal{O}}_{w}=x \, \& \, \tilde{\mathcal{P}}_{w}=0 \ldots 0]}{\operatorname{Prob}[\tilde{\mathcal{P}}_{w}=0 \ldots 0]}.
\end{equation}

According to Eq \ref{eq: multiplicative_error}, we have the following inequalities:
\begin{equation}
\begin{split}
    \frac{1}{c}\operatorname{Prob}\left[\mathcal{O}_{w}=x \, \&\,  \mathcal{P}_{w}=0 \ldots 0\right]\leq &\operatorname{Prob}[\tilde{\mathcal{O}}_{w}=x \,\&\, \tilde{\mathcal{P}}_{w}=0 \ldots 0] \leq c \operatorname{Prob}\left[\mathcal{O}_{w}=x \,\&\, \mathcal{P}_{w}=0 \ldots 0\right]\\
    \frac{1}{c} \operatorname{Prob}\left[\mathcal{P}_{w}=0 \ldots 0\right]\leq &\operatorname{Prob}[\tilde{\mathcal{P}}_{w}=0 \ldots 0] \leq c \operatorname{Prob}\left[\mathcal{P}_{w}=0 \ldots 0\right].
\end{split}
\end{equation}
Now we want to develop a similar bound for $\tilde{S}_{w}(x)=\frac{\operatorname{Prob}[\tilde{\mathcal{O}}_{w}=x \,\&\, \tilde{\mathcal{P}}_{w}=0 \ldots 0]}{\operatorname{Prob}[\tilde{\mathcal{P}}_{w}=0 \ldots 0]}$. Notice that, if we increase the numerator and decrease the denominator, the ratio increases:
\begin{equation}
    \tilde{S}_{w}(x)=\frac{\operatorname{Prob}[\tilde{\mathcal{O}}_{w}=x \,\&\, \tilde{\mathcal{P}}_{w}=0 \ldots 0]}{\operatorname{Prob}[\tilde{\mathcal{P}}_{w}=0 \ldots 0]}\leq \frac{c \operatorname{Prob}\left[\mathcal{O}_{w}=x \,\&\, \mathcal{P}_{w}=0 \ldots 0\right]}{\frac{1}{c} \operatorname{Prob}\left[\mathcal{P}_{w}=0 \ldots 0\right]}=c^2 S_{w}(x),
\end{equation}
and vice versa:
\begin{equation}
    \tilde{S}_{w}(x)=\frac{\operatorname{Prob}[\tilde{\mathcal{O}}_{w}=x \,\&\, \tilde{\mathcal{P}}_{w}=0 \ldots 0]}{\operatorname{Prob}[\tilde{\mathcal{P}}_{w}=0 \ldots 0]} \geq \frac{\frac{1}{c}\operatorname{Prob}\left[\mathcal{O}_{w}=x \,\&\, \mathcal{P}_{w}=0 \ldots 0\right]}{c \operatorname{Prob}\left[\mathcal{P}_{w}=0 \ldots 0\right]} = \frac{1}{c^2} S_{w}(x).
\end{equation}
The above two findings are organized to:
\begin{equation}
\label{eq: organized_inequality}
     \frac{1}{c^2} S_{w}(x) \leq \tilde{S}_{w}(x) \leq c^2 S_{w}(x).
\end{equation}
What are the conditions for $\tilde{S}_{w}(x)$ to also be able to detect if a bit string $w$ is from the language $L$ with bounded error (say, for $0 < \epsilon < \frac{1}{2}$) defined in \ref{eq: bounded_error_1/2}? We look at two cases:
\begin{itemize}
    \item If $w\in L$, ${S}_{w}(1)\geq \frac{1}{2}+\delta$. Then plugging in \ref{eq: organized_inequality}, $\frac{\frac{1}{2}+\delta}{c^2}\leq \tilde{S}_{w}(1)$. For $\tilde{S}_{w}(x)$ to have bounded error $\epsilon$ (refer to \ref{eq: bounded_error_1/2}), in this subcase we require $\tilde{S}_{w}(1)\geq \frac{1}{2}+\epsilon$, which is guaranteed only when $\frac{\frac{1}{2}+\delta}{c^2} \geq \frac{1}{2}+\epsilon\implies \frac{\frac{1}{2}+\delta}{c^2} > \frac{1}{2}$. 
    
    \item If $w\notin L$, ${S}_{w}(1)\leq \frac{1}{2}-\delta$. Then plugging in \ref{eq: organized_inequality}, $\tilde{S}_{w}(1)\leq c^2 (\frac{1}{2}-\delta)$. For $\tilde{S}_{w}(x)$ to have bounded error $\epsilon$ (refer to \ref{eq: bounded_error_1/2}), in this subcase we require $\tilde{S}_{w}(1)\leq \frac{1}{2}-\epsilon$, which is guaranteed only when $c^2 (\frac{1}{2}-\delta) \leq \frac{1}{2}-\epsilon\implies c^2 (\frac{1}{2}-\delta) < \frac{1}{2}$. 
\end{itemize}
Therefore, organize the two cases we have:
\begin{equation}
    \begin{split}
        \frac{\frac{1}{2}+\delta}{c^2} > \frac{1}{2}, \\
        c^2 (\frac{1}{2}-\delta) < \frac{1}{2},
    \end{split}
\end{equation}
which is reduced to 
\begin{equation}
    1+2\delta > c^2.
\end{equation}
Because $0<\delta<\frac{1}{2}$ and $c>1$, we have $c^2<2\implies c< \sqrt{2}$ will guarantee that $L \in \text {post-BPP}$. 
\end{proof} 

The following corollary is based on the previous two theorems:

\begin{corollary}[rephrase of \cite{Bremner_PH}]
\label{cor:corollary_1}
If the output probability distributions generated by 
IQP circuits could be weakly classically simulated to within multiplicative error $1 \leq c<\sqrt{2}$ then post-BPP = PP.
\end{corollary}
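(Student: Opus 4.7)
The plan is to sandwich $\text{post-BPP}$ between two copies of PP, using Theorem \ref{thm:theorem-2} together with Theorem \ref{thm:post-IQP=PP} for one inclusion, and an essentially trivial quantum simulation of randomized classical circuits for the other. Under the hypothesis of the corollary, Theorem \ref{thm:theorem-2} delivers $\text{post-IQP} \subseteq \text{post-BPP}$, and Theorem \ref{thm:post-IQP=PP} identifies $\text{post-IQP}$ with PP, so immediately $\text{PP} \subseteq \text{post-BPP}$.

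For the reverse direction, I would invoke the standard observation that any randomized classical polynomial-time circuit can be simulated by a quantum polynomial-time circuit with identical output statistics: each classical deterministic gate lifts to the corresponding permutation unitary on computational-basis states, and each random bit can be supplied by Hadamarding an ancilla qubit and measuring it in the computational basis. Carrying the designated output and post-selection registers through this lift exhibits $\text{post-BPP} \subseteq \text{post-BQP}$, and Theorem \ref{thm:post-IQP=PP} then identifies $\text{post-BQP}$ with PP. Combining both inclusions yields $\text{post-BPP} = \text{PP}$, which is the conclusion of the corollary.

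The only potentially delicate point, namely the careful tracking of the multiplicative-error parameter $c$ through the numerator and denominator of the post-selected conditional probability $S_w(x)$, is already absorbed into Theorem \ref{thm:theorem-2}; it is precisely there that the threshold $c < \sqrt{2}$ arises as the regime in which bounded-error decidability of a post-IQP language survives replacement of the IQP circuit by its classical multiplicative-error simulator. Once that theorem and Theorem \ref{thm:post-IQP=PP} are in hand, the corollary reduces to a one-line composition of inclusions, so no genuinely new technical obstacle appears in its proof.
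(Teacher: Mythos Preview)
Your proposal is correct and follows essentially the same route as the paper: use Theorem~\ref{thm:theorem-2} under the hypothesis to get $\text{post-IQP}\subseteq\text{post-BPP}$, invoke Theorem~\ref{thm:post-IQP=PP} to turn this into $\text{PP}\subseteq\text{post-BPP}$, and close the loop via $\text{post-BPP}\subseteq\text{post-BQP}=\text{PP}$. The only cosmetic difference is that the paper cites the inclusion $\text{post-BPP}\subseteq\text{post-BQP}$ as a known fact, whereas you sketch the underlying quantum-simulation-of-randomized-classical-circuits argument; the logical structure is identical.
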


\begin{proof} 
Because we have shown post-IQP = post-BQP = PP, if we can classically simulate an IQP circuit to within multiplicative error $1 \leq c<\sqrt{2}$ then post-IQP = post-BQP $\subseteq$ post-BPP (by Theorem~\ref{thm:theorem-2}). In addition, with post-BPP $\subseteq$ post-BQP \cite{eecs_BPP}, (if the multiplicative error simulation condition holds) we arrive at the relation post-BPP = post-BQP = PP. 
\end{proof} 

\begin{corollary}[rephrase of \cite{Bremner_PH}]
The polynomial hierarchy would collapse to its third level, i.e. $\mathrm{PH}=\Delta_{3}$, if the stated assumption in Corollary~\ref{cor:corollary_1} holds.
\end{corollary}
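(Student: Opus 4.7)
The plan is to chain together three already-stated ingredients: the equality post-BPP $=$ PP from Corollary~\ref{cor:corollary_1}, the relativized inclusion $\mathrm{P}^{\text{post-BPP}} \subseteq \Delta_3$ stated in Section~\ref{background_notations}, and Toda's Theorem $\mathrm{PH} \subseteq \mathrm{P}^{\mathrm{PP}}$. No new technical machinery is needed; the work is just to compose these facts in the right order and confirm that both endpoints of the resulting sandwich match.

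First I would invoke the hypothesis: assuming the multiplicative-error classical simulation bound $1 \leq c < \sqrt{2}$ holds, Corollary~\ref{cor:corollary_1} gives post-BPP $=$ PP. Next I would relativize both sides with a polynomial-time Turing machine, yielding $\mathrm{P}^{\mathrm{PP}} = \mathrm{P}^{\text{post-BPP}}$. This is the key substitution step, because it lets me transport an upper bound on $\mathrm{P}^{\text{post-BPP}}$ into an upper bound on $\mathrm{P}^{\mathrm{PP}}$.

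Then I would apply the two background facts in sequence. The stated inclusion $\mathrm{P}^{\text{post-BPP}} \subseteq \mathrm{P}^{\Delta_3} = \Delta_3$ implies $\mathrm{P}^{\mathrm{PP}} \subseteq \Delta_3$. Combining this with Toda's Theorem gives
\begin{equation}
\mathrm{PH} \subseteq \mathrm{P}^{\mathrm{PP}} = \mathrm{P}^{\text{post-BPP}} \subseteq \Delta_3.
\end{equation}
Since $\Delta_3 \subseteq \mathrm{PH}$ by the very definition of the polynomial hierarchy, the two containments close up to an equality $\mathrm{PH} = \Delta_3$, which is precisely the claimed collapse.

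There is no real obstacle here: the only thing to be careful about is that the relativization $\mathrm{P}^{\text{post-BPP}} = \mathrm{P}^{\mathrm{PP}}$ genuinely follows from post-BPP $=$ PP as classes of languages, which it does because $\mathrm{P}^{A}$ depends only on the set of oracle languages available, and the two classes agree as sets of languages. Thus the corollary is a purely formal consequence of Corollary~\ref{cor:corollary_1} and the background complexity-theoretic facts already recorded in Section~\ref{background_notations}.
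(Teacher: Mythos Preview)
Your proposal is correct and follows essentially the same route as the paper: chain post-BPP $=$ PP (from Corollary~\ref{cor:corollary_1}) with Toda's Theorem and the fact $\mathrm{P}^{\text{post-BPP}} \subseteq \Delta_3$ to obtain $\mathrm{PH} \subseteq \Delta_3$. Your added remark that $\Delta_3 \subseteq \mathrm{PH}$ closes the containment into an equality is a small bonus of explicitness over the paper's proof.
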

\begin{proof}
Using the fact that $$\mathrm{P}^{\text {post }-\mathrm{BPP}} \subseteq \mathrm{P}^{\Delta_{3}}=\Delta_{3}$$ and Toda's Theorem \cite{boaz_complexity}, $$\mathrm{PH} \subseteq \mathrm{P}^{\mathrm{PP}},$$
if the efficient classical simulation of IQP circuit assumption holds (Corollary~\ref{cor:corollary_1}), we get post-BPP = PP, which means that 
\begin{equation}
    \mathrm{PH} \subseteq \mathrm{P}^{\mathrm{PP}}=\mathrm{P}^{\text {post }-\mathrm{BPP}} \subseteq \mathrm{P}^{\Delta_{3}}=\Delta_{3},
\end{equation}
implying $\mathrm{PH} = \Delta_3$.
\end{proof} 

We thus conclude the first part of the proof review by a quick summarization: if the IQPC can be efficiently simulated classically (to within a multiplicative error), then the polynomial hiearchy will collapse to the third level, which is unlikely!

\subsection{Part 2: MPQ circuits generalize IQP circuits}
This part is based on \cite{expressive_q}, which shows that MPQCs are more expressive than IQPCs.  
Intuitively, the proof in \cite{expressive_q} uses substitution tricks to transform elements in the IQP circuit (i.e., the universal gate set $\{H, T, Z, CZ\}$) to those of MPQC (single-bit rotations, phase rotation, and CNOTs). We organize the proof in the similar ``divide and conquer'' approach, substituting each component of IQP circuit by those of MPQCs. We stress again that in a MPQC block, the single-bit unitary $U\left(\theta_{l}^{j}\right)=R_{z}\left(\theta_{l}^{j, 1}\right) R_{y}\left(\theta_{l}^{j, 2}\right)R_{z}\left(\theta_{l}^{j, 3}\right) R_{\phi}\left(\phi \right) R_{x}\left(\theta_{l}^{j, 4}\right) R_{z}\left(\theta_{l}^{j, 5}\right) R_{x}\left(\theta_{l}^{j, 6}\right)$ is parameterized by the rotation and phase angles $\{\theta_l^{j,1}, \theta_l^{j,2}, \theta_l^{j,3}, \phi_l^{j}, \theta_l^{j,4}, \theta_l^{j,5}, \theta_l^{j,6}\}$. The CNOT gates follow the following connection pattern: $CNOT_{12}, CNOT_{13},\dots CNOT_{1N}$ with $N-1$ entanglement connections where $N$ is the number of qubits.

\begin{proposition}
\label{prop: H}
The $H$ gates at the two ends (first and final layers) of an IQP circuit can be expressed by two MPQC blocks.
\end{proposition}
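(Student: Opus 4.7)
The plan is to construct two MPQC blocks whose combined action reproduces the boundary Hadamard layers $H^{\otimes N}$ of the IQP circuit (in the $H\text{-}V_1\text{-}\cdots\text{-}V_M\text{-}H$ form described at the end of Section 3.8). The strategy is the familiar ``parameterize single-qubit gates, then worry about the mandatory entanglers'' split that any MPQC-reduction argument must make.

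First, I would show that the single-bit unitary available in one MPQC block,
$U(\theta_l^j)=R_z(\theta_l^{j,1}) R_y(\theta_l^{j,2}) R_z(\theta_l^{j,3}) R_\phi(\phi) R_x(\theta_l^{j,4}) R_z(\theta_l^{j,5}) R_x(\theta_l^{j,6}),$
can realize the Hadamard gate on a single qubit up to an irrelevant global phase. The inner factor $R_z R_y R_z$ is already a $ZYZ$ Euler decomposition and hence universal for $SU(2)$; one only has to exhibit explicit angles (for instance $\theta_l^{j,1}=0,\ \theta_l^{j,2}=\pi/2,\ \theta_l^{j,3}=\pi$) together with $\phi=\theta_l^{j,4}=\theta_l^{j,5}=\theta_l^{j,6}=0$ so that the trailing $R_\phi R_x R_z R_x$ collapses to identity and the whole product equals $H$ up to phase. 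Applying this parameter choice to every qubit $j=1,\dots,N$ in the same block produces $H^{\otimes N}$ at the single-bit rotation layer.

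Next I would handle the fixed entangling layer $\mathrm{CNOT}_{12},\mathrm{CNOT}_{13},\dots,\mathrm{CNOT}_{1N}$ that is hard-wired at the end of every block; call this composite entangler $E$. Since $E$ is not identity, one MPQC block alone emits $E\,H^{\otimes N}$ rather than pure $H^{\otimes N}$. The key observation is that $E$ is an involution ($E^2=I$), so I would use the \emph{second} MPQC block of the pair with all single-bit angles set to zero, making its single-bit layer the identity. Stacking the two blocks yields $E\cdot I\cdot E\cdot H^{\otimes N}=H^{\otimes N}$, exactly the desired leading Hadamard layer. The same two-block construction, mirrored, realizes the trailing $H^{\otimes N}$ at the output side; at this point the proposition holds verbatim (each end uses a pair of blocks) and the ``constant number of MPQC blocks per IQP component'' target of Part 2 is met.

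The main obstacle is the non-parameterized CNOT layer: because it cannot be shut off, any MPQC-based simulation of a boundary $H^{\otimes N}$ has to account for the spurious entanglement it introduces. The cancellation trick above resolves this cleanly for a \emph{standalone} $H^{\otimes N}$, but in the full IQP reduction one must verify that inserting the identity-then-$E$ block between the Hadamard layer and the first $V_1$ block does not disturb the subsequent simulation of $V_1,\dots,V_M$. I expect this to reduce to checking that the $E$-layer commutes with, or can be absorbed into, the single-bit prefix of the next block used to realize $V_1$ — a bookkeeping step rather than a conceptual one, which is why the proposition is stated with a constant (two) block overhead.
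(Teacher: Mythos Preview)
Your approach is essentially the same as the paper's: realize $H$ on every qubit using the single-bit rotation portion of one block, then append a second block with all angles zero so that the two hard-wired entangling layers $E=\mathrm{CNOT}_{12}\cdots\mathrm{CNOT}_{1N}$ cancel via $E^2=I$. The paper differs only in which Euler triple it uses: it takes the trailing $R_xR_zR_x$ factor and the explicit identity $H=R_x(\pi/2)R_z(\pi/2)R_x(\pi/2)$, setting the parameter tuple to $(0,0,0,0,\pi/2,\pi/2,\pi/2)$, whereas you invoke $ZYZ$ universality on the leading $R_zR_yR_z$ factor.

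Two small points. First, your concrete example angles $(\theta^1,\theta^2,\theta^3)=(0,\pi/2,\pi)$ do not actually give $H$ up to a global phase (multiply it out); your argument is still fine because $ZYZ$ universality guarantees suitable angles exist, but you should either fix the example or drop it. Second, the worry in your final paragraph is unnecessary: the two-block composite is exactly $E\cdot I\cdot E\cdot H^{\otimes N}=H^{\otimes N}$ with no dangling $E$ on either side, so it concatenates cleanly with whatever blocks realize $V_1,\dots,V_M$ and no commutation/absorption bookkeeping is needed.
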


\begin{proof}
Because $H=R_{x}(\pi / 2) R_{z}(\pi / 2) R_{x}(\pi / 2)$, we can simply set the MPQC angle parameters to $\{\theta_l^{j,1}, \theta_l^{j,2}, \theta_l^{j,3}, \phi_l^{j}, \theta_l^{j,4}, \theta_l^{j,5}, \theta_l^{j,6}\}=(0,0,0,0, \pi / 2, \pi / 2, \pi / 2)$ and then set the following block to have parameters to $\{\theta_l^{j,1}, \theta_l^{j,2}, \theta_l^{j,3}, \phi_l^{j}, \theta_l^{j,4}, \theta_l^{j,5}, \theta_l^{j,6}\}=(0,0,0,0, 0, 0, 0)$ so that the CNOT entanglement portions of the these two blocks cancel out to $I$ (Note that $(CNOT_{12}, CNOT_{13},\dots CNOT_{1N}) (CNOT_{12}, CNOT_{13},\dots CNOT_{1N})=I$, because $CNOTs$ commute if only the control bit is the same).
\end{proof} 

\begin{proposition}
\label{prop: T}
The $T$ gates of an IQP circuit can be expressed by two MPQC blocks.
\end{proposition}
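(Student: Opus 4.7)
The key observation is that the single-qubit unitary of a MPQC block,
$$U(\theta_l^j) = R_z(\theta_l^{j,1}) R_y(\theta_l^{j,2}) R_z(\theta_l^{j,3}) R_\phi(\phi_l^j) R_x(\theta_l^{j,4}) R_z(\theta_l^{j,5}) R_x(\theta_l^{j,6}),$$
already contains an explicit $R_\phi(\phi_l^j)$ factor, and $T = R_\phi(\pi/4)$ by definition. So the plan is to realize a single $T$ gate acting on a target qubit $j$ by setting the MPQC parameters so that all rotations collapse to identity except for the phase rotation on qubit $j$, and then using a second MPQC block purely to cancel the CNOT entanglement introduced by the first. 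This mirrors the strategy of Proposition~\ref{prop: H}.

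Concretely, in the first block I would take the parameters on qubit $j$ to be
$\{\theta_l^{j,1},\dots,\theta_l^{j,6}, \phi_l^{j}\} = (0,0,0,0,0,0,\pi/4)$,
so that $U(\theta_l^j) = R_\phi(\pi/4) = T$. For every other qubit $k \ne j$, I would take all seven parameters equal to $0$, so that $U(\theta_l^k) = I$. The single-qubit portion of this block is therefore $I^{\otimes (j-1)} \otimes T \otimes I^{\otimes (N-j)}$, followed by the fixed entanglement pattern $\mathrm{CNOT}_{12}\mathrm{CNOT}_{13}\cdots\mathrm{CNOT}_{1N}$.

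In the second block I would set every parameter on every qubit to $0$, so that the single-qubit portion is the identity on all $N$ qubits, and the block reduces to the same entanglement pattern $\mathrm{CNOT}_{12}\mathrm{CNOT}_{13}\cdots\mathrm{CNOT}_{1N}$. Since CNOTs sharing a common control commute and each CNOT squares to the identity, the concatenation of the two entanglement layers equals $I$, exactly as used in the proof of Proposition~\ref{prop: H}. Composing the two blocks therefore gives $I^{\otimes (j-1)} \otimes T \otimes I^{\otimes (N-j)}$, which is precisely a $T$ gate applied to qubit $j$ with no other effect.

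There is essentially no technical obstacle here; the only thing to be careful about is the ordering of operations within a block (rotations first, then CNOTs) and the fact that the rotations in the cancelling block must be exactly $I$ so that nothing separates the two CNOT layers and prevents their cancellation. Since $R_\phi(0) = R_x(0) = R_y(0) = R_z(0) = I$, setting all parameters in the second block to zero is sufficient, and the construction uses exactly two MPQC blocks as claimed.
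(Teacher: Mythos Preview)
Your proposal is correct and follows essentially the same approach as the paper: set the phase parameter to $\pi/4$ (and all rotation angles to $0$) in the first block to realize $T=R_\phi(\pi/4)$, then use a second all-zero block so that the two identical CNOT layers cancel. You are slightly more explicit than the paper about the parameters on the non-target qubits and about the rotation/CNOT ordering needed for the cancellation, but the argument is the same.
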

\begin{proof}
Because $T=R_{\phi}(\pi/4)$, then we can set a MPQC block's parameters to $\{\theta_l^{j,1}, \theta_l^{j,2}, \theta_l^{j,3}, \phi_l^{j}, \theta_l^{j,4}, \theta_l^{j,5}, \theta_l^{j,6}\}=(0,0,0,\pi/4, 0,0,0)$ followed by another block with $\{\theta_l^{j,1}, \theta_l^{j,2}, \theta_l^{j,3}, \phi_l^{j}, \theta_l^{j,4}, \theta_l^{j,5}, \theta_l^{j,6}\}=(0,0,0,0, 0,0,0)$ to cancel out the CNOTs.
\end{proof} 

\begin{proposition}
\label{prop: CZ}
The $CZ$ gates of an IQP circuits can be expressed by constant numbers of MPQC blocks.
\end{proposition}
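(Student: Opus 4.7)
The plan is to realize $CZ_{1k}$ in just two MPQC blocks using a ``$ZZ$-rotation gadget,'' and then to reduce the general $CZ_{ij}$ case to this one via a constant-block SWAP conjugation.

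The key observation is the following gadget computation. Let block $1$ carry the rotations $R_z(\pi/2)_1 R_z(\pi/2)_k$ (identity on all other qubits) and block $2$ carry the single rotation $R_z(-\pi/2)_k$ (identity on all other qubits). Writing the MPQC entanglement layer as $E = \prod_{m=2}^{N} CNOT_{1m}$, the composition unrolls to $B_2 B_1 = E \cdot R_z(-\pi/2)_k \cdot E \cdot R_z(\pi/2)_1 R_z(\pi/2)_k$. Because $R_z(-\pi/2)_k$ commutes with every spectator $CNOT_{1m}$ with $m \neq k$, these spectators commute through to the outside of $E \cdot R_z(-\pi/2)_k \cdot E$, cancel in pairs via $CNOT_{1m}^2 = I$, and leave the reduced expression $CNOT_{1k}\,R_z(-\pi/2)_k\,CNOT_{1k}$. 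The Pauli-conjugation identity $CNOT_{1k}\,Z_k\,CNOT_{1k} = Z_1 Z_k$ turns this into $\exp(i\pi Z_1 Z_k/4)$. Combining with the two leftover single-qubit rotations, which are diagonal in $Z$ and hence commute with the $ZZ$ exponential, gives $B_2 B_1 = R_z(\pi/2)_1 R_z(\pi/2)_k \, \exp(i\pi Z_1 Z_k/4)$. Expanding $CZ_{1k} = \exp(i\pi(I-Z_1)(I-Z_k)/4) = e^{i\pi/4}\,e^{-i\pi Z_1/4}\,e^{-i\pi Z_k/4}\,e^{i\pi Z_1 Z_k/4}$, the product collapses to $e^{-i\pi/4}\,CZ_{1k}$, which equals $CZ_{1k}$ up to an irrelevant global phase. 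Two MPQC blocks therefore suffice for $CZ_{1k}$.

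For a general $CZ_{ij}$ with $i, j \geq 2$, I would apply the identity $CZ_{ij} = SWAP_{1i}\,CZ_{1j}\,SWAP_{1i}$ (since $SWAP_{1i}$ exchanges qubits $1$ and $i$, moving the $CZ$ from qubits $(1,j)$ onto qubits $(i,j)$), and decompose $SWAP_{1i} = CNOT_{1i}\,CNOT_{i1}\,CNOT_{1i}$ with $CNOT_{i1} = (H_1 H_i)\,CNOT_{1i}\,(H_1 H_i)$. Each Hadamard is realizable in two blocks by Proposition~\ref{prop: H}, and each isolated $CNOT_{1i}$ is realizable as $(I \otimes H_i)\,CZ_{1i}\,(I \otimes H_i)$ using the two-block $CZ$ construction above. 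Composing these pieces, any $CZ_{ij}$ is expressed with a constant number of MPQC blocks.

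The main obstacle is the CNOT bookkeeping: every MPQC block unavoidably applies the entire entanglement layer $E$, so at each step one must verify that unwanted $CNOT_{1m}$ factors either square to identity or commute through the intervening rotations and then cancel. The gadget succeeds precisely because its critical rotation $R_z(-\pi/2)_k$ acts on a single qubit other than the control, so all $N-2$ spectator CNOTs slide harmlessly outside the $E \cdots E$ sandwich; the same principle, applied recursively inside the SWAP construction, handles the general $(i,j)$ case.
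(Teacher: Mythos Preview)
Your argument is correct.  The SWAP-conjugation that reduces a general $CZ_{ij}$ to $CZ_{1j}$ is the same as in the paper, but your realization of $CZ_{1k}$ is genuinely different and tighter.  The paper first builds $CNOT_{1k}$ in four blocks via the single-qubit $A$--$B$--$C$ controlled-gate decomposition $C_k\,(CNOT_{1k})\,B_k\,(CNOT_{1k})\,A_k$ with $A_k=R_z(0)R_y(\pi/2)$, $B_k=R_y(-\pi/2)R_z(\pi)$, $C_k=R_z(\pi)$, letting the spectator CNOTs in the entanglement layers cancel pairwise, and only then obtains $CZ_{1j}=(I\otimes H)\,CNOT_{1j}\,(I\otimes H)$ by wrapping with Hadamards.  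Your $ZZ$-rotation gadget goes straight to $CZ_{1k}$ in two blocks by exploiting $CNOT_{1k}\,R_z(-\pi/2)_k\,CNOT_{1k}=\exp(i\pi Z_1Z_k/4)$ together with the diagonal factorization $CZ=e^{i\pi/4}e^{-i\pi Z_1/4}e^{-i\pi Z_k/4}e^{i\pi Z_1Z_k/4}$.  This buys a smaller constant and a cleaner cancellation argument; the paper's route has the minor advantage of producing $CNOT_{1k}$ directly as a primitive, but since you recover $CNOT_{1i}=(I\otimes H_i)\,CZ_{1i}\,(I\otimes H_i)$ from your $CZ$ gadget anyway, both approaches ultimately cover the same ground and yield a constant block count for arbitrary $CZ_{ij}$.
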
 
\begin{proof} 
The first challenge is that the arbitrary $CZ_{k,j}$ connectivity (qubit $k$ controls qubit $j$) might not be covered in the pre-defined connectivity $(CNOT_{12}, CNOT_{13},\dots CNOT_{1N})$, so we use the following steps:
\begin{itemize}
    \item We first use the $\textit{SWAP}_{1,k}$ operation (expressed by MPQC blocks) to switch the control bit ($k$-th qubit) of the arbitrary $CZ_{k,j}$ to the first bit. A single $\textit{SWAP}_{1,k}$ gate can be expressed by three \textit{CNOTs}: $SWAP_{1, k}=CNOT_{1, k}CNOT_{k, 1}CNOT_{1, k}$. Following Proposition 1 in \cite{expressive_q}, the $CNOT_{1,k}$ gate can be substituted by $C_k (CNOT_{1,k}) B_k (CNOT_{1,k}) A_k$, where $A_k=R_z (0)R_y(\pi/2)$, $B_k=R_y(-\pi/2)R_z(\pi)$, and $C_k=R_z(\pi)$. Therefore, we first use three MPQC blocks to parameterize $A_k$, $B_k$, and $C_k$ on the $k$-th qubit only, setting all angle parameters to $0$ in other qubits. Then another MPQC with all $0$ parameters is appended after the three blocks to \textbf{cancel out} the $CNOT$s that are irrelevant to the $k$-th qubit. The resulting circuit, is exactly equivalent to $C_k (CNOT_{1,k}) B_k (CNOT_{1,k}) A_k=CNOT_{1,k}$.
    
    \item Then, the reverse $CNOT_{k, 1}$ gate in the middle of the $\textit{SWAP}_{1,k}$ is substituted by $CNOT_{k, 1} = (H\otimes H) CNOT_{1, k} (H\otimes H)$, where each $(H\otimes H)$ gate can be substituted by two MPQC blocks according to Proposition \ref{prop: H}.
    \item Then, because the $CZ_{1,j}$ gate (after transfering the control from qubit $k$ to qubit $1$) is equivalent to $CZ_{1,j}=(I \otimes H) CNOT_{1,j} (I\otimes H)$, we already show how to implement $H$ and $CNOT_{1,j}$ gates, according to the previous steps and Proposition \ref{prop: H}.
\end{itemize}
Therefore, we the whole process is broken down to
\begin{equation}
    \begin{split}
        &CZ_{k,j}=SWAP_{1,k}CZ{1,j} SWAP_{1,k}\\
        &=\underbrace{CNOT_{1, k}CNOT_{k, 1}CNOT_{1, k}}_{SWAP_{1,k}} \underbrace{(I \otimes H) CNOT_{1,j} (I\otimes H)}_{CZ_{1,j}} \underbrace{CNOT_{1, k}CNOT_{k, 1}CNOT_{1, k}}_{SWAP_{1,k}}\\
        &=\underbrace{CNOT_{1, k}(H\otimes H) CNOT_{1, k} (H\otimes H)CNOT_{1, k}}_{SWAP_{1,k}} \\
        &\underbrace{(I \otimes H) CNOT_{1,j} (I\otimes H)}_{CZ_{1,j}} \underbrace{CNOT_{1, k}(H\otimes H) CNOT_{1, k} (H\otimes H) CNOT_{1, k}}_{SWAP_{1,k}},
    \end{split}
\end{equation}
where in the last step, all the components can be substituted by constant numbers of MPQC blocks ($2$ blocks for each $I\otimes H$ or $H\otimes H$ and 4 blocks for each $CNOT_{1,i}$).
\end{proof} 

Therefore, for circuits on $N$ qubits, Propositions \ref{prop: H}, \ref{prop: T}, and \ref{prop: CZ} together show that we need at most $\mathcal{O}(poly(N))$ MPQC blocks to completely substitute an IQP circuit with $\mathcal{O}(poly(N))$ commuting gates, as each commuting gate (as well as the $H$ gate) can be substituted with constant numbers of MPQC blocks. Therefore, the expressive power of MPQC is at least that of IQPC, which is more expressive than classical circuits (such as NNs).

\section{ML Methods to Showcase the Power of MPQCs}\label{ML_methods}
With the theoretical knowledge that MPQCs are more expressive than classical circuits, we wish to actually build an MPQC-based generative model and compare its performance with a classical NN on the task of learning some high-dimensional discrete dataset. We first introduce relevant methods used in our models and experiments.

\subsection{Gumbel Softmax for classical NNs}
The Gumbel softmax \cite{gumbel_softmax} is a popular method for a generative model to output discrete/categorical data. It approximates the non-differentiable sample from a categorical distribution with a differentiable sample from the Gumbel-Softmax distribution, which can be smoothly annealed into a categorical distribution during training with a temperature schedule. Previous works have shown its effectiveness in GAN \cite{Gumbel_GAN}, variational autoencodrs and generative semi-supervised classification \cite{gumbel_softmax}, self-supervised discrete speech representations \cite{gumbel_speech}, and community clustering detection \cite{gumbel_clustering}, etc. Specifically, Gumbel Softmax \cite{gumbel_softmax} uses the softmax function as a continuous, differentiable approximation to arg max \cite{Gumbel_original}, and generates $k$-dimensional sample vectors $y$, where
$$
y_{i}=\frac{\exp \left(\left(\log \left(\pi_{i}\right)+g_{i}\right) / \tau\right)}{\sum_{j=1}^{k} \exp \left(\left(\log \left(\pi_{j}\right)+g_{j}\right) / \tau\right)} \quad \text { for } i=1, \ldots, k.
$$

One can thus use the Gumbel Softmax in a generative model to output discrete synthesized data and back-propagate through the softmax operation.

\subsection{Quantum Born Machine}
Quantum circuit Born machine (QCBM) \cite{benedetti-quantum-shallow}\cite{dif_born} leverages the probabilistic nature of quantum systems for generating classical data (collapsed state upon measurement). QCBM represents classical probability distribution via the Born's rule \cite{born_boltzmann}\cite{born_rule}. QCBM utilizes a variational quantum circuit (such as MPQC) to evolve the initial/input quantum state $|z\rangle=|0\rangle^{\otimes N}$ to some target state via unitary gates: $\left|\psi_{\theta}\right\rangle=U_{\theta}|z\rangle$. One measures the outputs state in the computational basis to produce a classical sample (bit string) $x \sim p_{\theta}(x)=\left|\left\langle x \mid \psi_{\theta}\right\rangle\right|^{2}$. Excitingly, the output probability densities of a general quantum circuit cannot be efficiently simulated by classical means, the QCBM is among the several proposals to show quantum supremacy \cite{boixo_supremacy}. Ref. \cite{dif_born} developed a differentiable learning scheme of QCBM by minimizing the maximum mean discrepancy (MMD) loss  \cite{dif_born} \cite{MMD_gretton} using an Gaussian Kernel:

\begin{equation}
    \begin{aligned}
\mathcal{L} &=\left\|\sum_{x} p_{\theta}(x) \phi(x)-\sum_{x} \pi(x) \phi(x)\right\|^{2} \\
&=\underset{x \sim p_{\theta}, y \sim p_{\theta}}{\mathbb{E}}[K(x, y)]-2 \underset{x \sim p_{\theta}, y \sim \pi}{\mathbb{E}}[K(x, y)]+\underset{x \sim \pi, y \sim \pi}{\mathbb{E}}[K(x, y)],
\end{aligned}
\end{equation}

where $\pi$ and $p_\theta$ are the data and model distributions respectively. The function $\phi(\cdot)$ maps a data sample to an infinite-dimensional reproducing kernel Herbert space \cite{kernel_space}. This setup conveniently saves one from working in the infinite-dimensional feature space by defining a kernel function $K(x, y)=\frac{1}{c} \sum_{i=1}^{c} \exp \left(-\frac{1}{2 \sigma_{i}}|x-y|^{2}\right)$ to only reveal the pair-wise distances between two samples under various scales (the \textit{kernel trick}).

\subsection{Generative Adversarial Quantum Circuits}
As a classical-quantum-hybrid architecture, Generative Adversarial Quantum Circuits \cite{adv_Born} uses the QCBM as its generator and a classical deep neural network as the Discriminator. Similar to that of a classical GAN, it follows a adversarial training scheme: the (classical) discriminator $\mathrm{D}$ takes either real samples from the dataset or synthetic samples from the quantum circuit and outputs a ``score'',  $D_{\phi}(x)$ that predicts how likely the sample is from the data distribution. The Generator (G) generates synthetic samples to ``fool'' the discriminator by producing samples $x$ that have high scores $D_{\phi}(x)$. The quantum generator and the classical discriminator play against each other according some loss function, which defines a minimax problem:
$$
\min _{G_{\theta}} \max _{D_{\phi}} \mathbb{E}_{x \sim \pi}\left[\ln D_{\phi}(x)\right]+\mathbb{E}_{x \sim p_{\theta}}\left[\ln \left(1-D_{\phi}(x)\right)\right].
$$

This zero-sum minimax objective \cite{goodfellow2014generative} does not perform well for classical GANs due to vanishing gradients. The non-saturating loss function is:
\begin{equation}
    \begin{aligned}
\mathcal{L}_{D_{\phi}} &=-\mathbb{E}_{x \sim \pi}\left[\ln D_{\phi}(x)\right]-\mathbb{E}_{x \sim p_{\theta}}\left[\ln \left(1-D_{\phi}(x)\right)\right] \\
\mathcal{L}_{G_{\theta}} &=-\mathbb{E}_{x \sim p_{\theta}}\left[\ln D_{\phi}(x)\right].
\end{aligned}
\end{equation}

\subsection{Maximal Coding Rate Objective} 
We propose to use the Maximal Coding Rate Reduction (MCR) \cite{yu2020learning} objective to train the quantum generative model. The MCR objective has interesting geometric interpretation related to ball counting in information theory \cite{chan2020redu}\cite{yu2020learning}. When the loss is maximized, data samples from two distinct classes are maximally compressed to to orthogonal subspaces, as opposed to falling into two sides of a separable hyperplane. This geometric feature is intuitively a more natural way of compressing data. We consider this distance measure exactly because of this richer feature space representation than the original GAN (which only outputs a $1$-dimensional score). In the simplest case (discrete and compact distributions $ p_X$ and $ p_Y$), given two groups of samples, $x\sim p_X$ and $y\sim p_Y$, and if we treat each sample as a column vector and construct $\X$ and $\Y$, both are $\mathcal{R}^{D\times m}$, where $D$ is the dimension of features and $m$ is the batch size, we define the MCR (second order) ``distance'' between sample matrices $\X$ and $\Y$ as:
\begin{equation}
\label{MCR_x_space}
    \begin{split}
        \Delta R([\X,\Y])=&\frac{1}{2}\log\det(\I + \frac{d}{2m\epsilon^2}\X\X^T+\frac{d}{2m\epsilon^2}\Y\Y^T)\\
        &-\frac{1}{4}\log\det(\I + \frac{d}{m\epsilon^2}\X\X^T)-\frac{1}{4}\log\det(\I + \frac{d}{m\epsilon^2}\Y\Y^T).
    \end{split}
\end{equation}
This metric is $0$ iff $\frac{1}{m}\X\X^T=\frac{1}{m}\Y\Y^T$, i.e., the sample covariance matrices (given the same batch sizes) are equal. Since the binary images/bitstrings follow distributions with higher moment correlations than the second moment, it is not meaningful to directly apply Eq. \ref{MCR_x_space} to the data space. Rather, we use a parametric mapping $D_\phi$ (which is essentially the Discriminator) to compress the high-dimensional discrete data to a low-dimensional continuous space $D_\phi: X\rightarrow Z$, where $x\in \mathbb{R}^D$ and $x\in \mathbb{R}^d$, $D\gg d$. In addition, if we express the sample covariance matrix in terms of probability $p(x)$ (for the case when the sample size $m\rightarrow \infty$):
\begin{equation}
    \lim_{m\rightarrow \infty} \frac{d}{m\epsilon^2}\X\X^T=\frac{d}{\epsilon^2}\sum_{\x} p_{\X}(\x)\x\x^T,
\end{equation}
the loss function becomes:
\begin{equation}
\label{infinite}
    \begin{split}
        \Delta& \lim_{m\rightarrow \infty} R([\X,\Y])=\frac{1}{2}\log\det(\I + \frac{d}{2\epsilon^2}\sum_{\x} p_{\X}(\x)D_\phi(\x)D_\phi(\x)^T+\frac{d}{2\epsilon^2}\sum_{\y} p_{\Y}(\y)D_\phi(\y)D_\phi(\y)^T)\\
        &-\frac{1}{4}\log\det(\I + \frac{d}{\epsilon^2}\sum_{\x} p_{\X}(\x)D_\phi(\x)D_\phi(\x)^T)-\frac{1}{4}\log\det(\I + \frac{d}{\epsilon^2}\sum_{\y} p_{\Y}(\y)D_\phi(\y)D_\phi(\y)^T).
    \end{split}
\end{equation}

\subsection{Differential Circuit Gradients}
The MMD loss, non-saturating GAN loss, and the MCR loss all have differentiable gradients with respect to the circuit gate parameters (rotation angles). According Ref. \cite{quantum_gradient_36}, \cite{quantum_gradient_38}, the gradient of the probability on value $p_\theta (x)$ with respect to a quantum circuit parameter $\theta$, is
\begin{equation}
\label{gradient}
    \frac{\partial p_{\boldsymbol{\theta}}(x)}{\partial \theta_{l}^{\alpha}}=\frac{1}{2}\left(p_{\boldsymbol{\theta}^{+}}(x)-p_{\boldsymbol{\theta}^{-}}(x)\right)
\end{equation}

where $p_{\theta^\pm} (x)=\left|\left\langle x \mid \psi_{\theta}\right\rangle\right|^{2}$ is the probability of observing $x$ generated by the same circuit with shifted circuit parameter $\theta^{\pm} \equiv \theta \pm \frac{\pi}{2}$. This is an unbiased estimate of the gradient.

\subsection{Sampling-based loss function gradients} 
All three loss function that we consider in this study, namely the MMD, non-saturating GAN and the MCR losses are sampling-based. To see this, we apply the gradient formula \ref{gradient} to all three objective functions with respect to a given generator parameter $\theta$ and get (for derivations, see Appendix B and C for the NN and RBF kernel versions): 
\begin{equation}
    \begin{aligned}
\nabla_{\theta}\mathcal{L}_{\text{MMD}} &=\underset{x \sim p_{\theta^{+}}, y \sim p_{\theta}}{\mathbb{E}}[K(x, y)]-\underset{x \sim p_{\theta^{-}}, y \sim p_{\theta}}{\mathbb{E}}[K(x, y)]-\underset{x \sim p_{\theta^{+}}, y \sim \pi}{\mathbb{E}}[K(x, y)]+\underset{x \sim p_{\theta^{-}}, y \sim \pi}{\mathbb{E}}[K(x, y)].
\end{aligned}
\end{equation}
And for the non-saturating GAN loss:
\begin{equation}
\begin{split}
    \nabla_{\theta} \mathcal{L}_{G_{\theta}}=\frac{1}{2} \mathbb{E}_{x \sim p_{\theta^{-}}}\left[\ln D_{\phi}(x)\right]-\frac{1}{2} \mathbb{E}_{x \sim p_{\theta^{+}}}\left[\ln D_{\phi}(x)\right],
\end{split}
\end{equation}
then for the MCR loss:
\begin{equation}
    \nabla_{\theta}\mathcal{L}_{\text{MCR}}=\nabla_{\theta}\mathcal{L}_{\text{A}}+\nabla_{\theta}\mathcal{L}_{\text{B}},
\end{equation}
\begin{equation}
    \begin{split}
        \nabla_{\theta}\mathcal{L}_{\text{A}}
        &=\frac{1}{4}\mathbb{E}_{\hat{\x}\sim p_{\theta^+}}<(\I + \frac{d}{2\epsilon^2}\sum_{\x} p_{\boldsymbol{\theta}}(\x)\phi(\x)\phi(\x)^T+\frac{d}{2\epsilon^2}\sum_{\y} p_{\Y}(\y)\phi(\y)\phi(\y)^T)^{-1}, \phi(\hat{\x})\phi(\hat{\x})^\top>\\
        &-\frac{1}{4}\mathbb{E}_{\hat{\x}\sim p_{\theta^-}}<(\I + \frac{d}{2\epsilon^2}\sum_{\x} p_{\boldsymbol{\theta}}(\x)\phi(\x)\phi(\x)^T+\frac{d}{2\epsilon^2}\sum_{\y} p_{\Y}(\y)\phi(\y)\phi(\y)^T)^{-1}, \phi(\hat{\x})\phi(\hat{\x})^\top>,
    \end{split}
\end{equation}
\begin{equation}
    \begin{split}
        \nabla_{\theta}\mathcal{L}_{\text{B}}&=\frac{1}{8}\mathbb{E}_{\hat{\x}\sim p_{\theta^+}}<(\I + \frac{d}{\epsilon^2}\sum_{\x} p_{\boldsymbol{\theta}}(\x)\phi(\x)\phi(\x)^T)^{-1}, \phi(\hat{\x})\phi(\hat{\x})^\top>\\
        &-\frac{1}{8}\mathbb{E}_{\hat{\x}\sim p_{\theta^-}}<(\I + \frac{d}{\epsilon^2}\sum_{\x} p_{\boldsymbol{\theta}}(\x)\phi(\x)\phi(\x)^T)^{-1}, \phi(\hat{\x})\phi(\hat{\x})^\top>,
    \end{split}
\end{equation}
where $<A,B>$ denotes the element-wise dot product between matrices $A$ and $B$.

A major mistake or misconception some authors have made in the past is the (wrong) choice of \textbf{intractable loss functions}. The amplitudes $\alpha_x$ in the quantum state, $\ket{\psi}_n = \sum_{x=0}^{2^n} \alpha_x \ket{x}$ are unknown to the observer, and can only be approximated by tallying many measurement results. Loss functions that explicitly require generated sample probability $p_G(x)$ for generated sample $x$, such as the KL divergence and negative log-likelihood, \textbf{cannot be efficiently approximated by sampling alone}, especially when the data is high-dimensional and multi-modal, as seen in natural images.
\\

\subsection{Arguments against MMD Loss (Maximal Mean Discrepancy)}
Although MMD(P, Q) is a perfect statistic (tautology) to see if probability distribution P equals Q, test power of its empirical estimation depends on the form of used kernels \cite{binkowski2018demystifying} \cite{liu2021learning}. Gaussian kernels have limited representation power in multi-dimensional samples. It requires many observations to distinguish the two distributions \cite{liu2021learning} with Gaussian Kernel. Arguments against MMD include:
\begin{enumerate}
    \item ``The Gaussian kernel (used by previous MMD-based adversarial data detection methods) has limited repre- sentation power and cannot measure the similarity between two multidimensional samples (e.g., images) well'' \cite{gao2021maximum} \cite{wenliang2021learning}
    \item ``Previous MMD-based adversarial data detection methods overlook the optimization of parameters of the used kernel.''\cite{gao2021maximum}.  Furthermore, recent studies have shown that \textbf{Gaussian kernel with an optimized bandwidth still has limited representation power} for complex distributions (e.g., multi- modal distributions used in (\cite{wenliang2021learning}, \cite{liu2021learning})). 
    
    \item ``Although such (spatially invariant, i.e., Gaussian) kernels are sufficient for consistency in the infinite-sample limit, the induced models can fail in practice on finite datasets, especially if the data takes differently-scaled shapes in different parts of the space.''\cite{wenliang2021learning}
\end{enumerate}

\subsection{Discriminator or Kernel?}
With ``universal'' RBF kernels such as the Gaussian or Polynomial Kernel, when the sample size is infinite, theoretically we have that two distributions are identical iff the MMD loss is zero \cite{MMD_gretton}. However, as shown in our experiments, in practice when we only have limited sample sizes, MMD's performance degrades drastically. Hence an extra classical neural network is needed to (implicitly) gauge the distance between the real and the generated data distributions with more robustness.

\section{Numerical Simulations} \label{numerical_sims}
We simulate the quantum circuit on a classical machine using a custom built quantum simulator based on Numpy and the Scipy sparse matrices package (scipy.sparse). For the adversarial architecture, we build the classical Discriminator with PyTorch and use Adam as our gradient-based optimizer.

\paragraph{Architectures} For fair comparisons, we use a $4$-layer neural network for the classical Generator with $56$ parameters. We choose a latent vector space of $2$ dimensions, and the Discriminator takes a similar, but reverse layer design as the Generator. 

Meanwhile, for the variational quantum circuit, we use a depth-$4$ architecture, where each depth/layer consists of one single-rotation gate on all qubits and a pair-wise entanglement layer with the CNOT gate. There are in total $52$ parameters.

\paragraph{Dataset} We train all of the quantum models with the same randomly initialized state by setting the Numpy seed to a value of $700$. All models are trained till convergence, and we present the results of both the best model and the last-iteration model. We evaluate the performance of the model by the Total Variation (TV) metric:
\begin{equation}
    T V\left(p_{\theta}, \pi\right)=\frac{1}{2} \sum_{\boldsymbol{x}}\left|p_{\theta}(\boldsymbol{x})-\pi(\boldsymbol{x})\right|,
\end{equation}
where $p_\theta$ is the probability density of the trained model, and $\pi$ is the data probability density. For the quantum Generator circuit, we have direct access to the $p_\theta(x)$ values from the quantum simulator, meanwhile in the case of the classical GAN+Gumbel Softmax model, we approximate $\p_\theta(x)$ by taking large enough amount of samples after training.

We use the $2\times2$ Bars and Stripes (BAS) dataset, which is a binary-pixel dataset consists of bars (vertical) and stripes (horizontal). For $n \times m$ pixels, the number of patterns that are valid BAS is $N_\text{BAS}= 2^n +2^m-2$. The target/data distribution, $p(x)=\frac{1}{N_\text{BAS}}$, if and only if $x$ is a valid BAS image. 
\begin{center}
\begin{figure}
    \centering
    \includegraphics[scale=0.3]{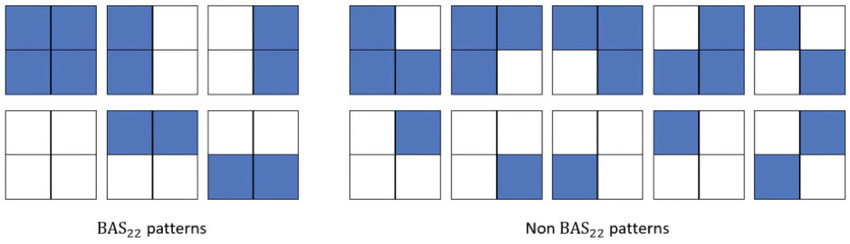}
    \caption{BAS dataset. Image from \cite{BAS}}
    \label{fig: BAS}
\end{figure}
\end{center}

\subsection{Experiment 1: Classical GAN+Gumbel Softmax}
We first investigate the discrete data learning capability of the classical GAN+Gumbel Softmax model. 
The randomly initialized Generator achieves a TV value of $1.304$. We first use constant temperatures of 1e-2 and 1e-4 for the softmax operation, and then experiment with a linear annealing schedule from 1e-2 to 1e-4. We train the model $2000$ iterations until convergence, and the learning outcome (after sampling $16000$ generated samples and approximate the distribution) achieves a TV value of $1.13, 1.62, $ and $1.29$ respectively (Fig \ref{fig: gumbel}). For constant temperatures, the Generator barely captures one mode of the real distribution, producing much noise on invalid patterns. The situation does not get better as we anneal the temperature schedule. Modal collapse and noises persist.

\subsection{Experiment 2: Varying sample sizes with Born Machine+Gaussian Kernel+MMD loss}
In this experiment, we investigate how the Born Machine+Gaussian Kernel+MMD loss scheme performs under the condition of limited batch size. This has never been investigated in literature, as previous authors simply use sample sizes that far exceeds the probability space itself. We train the same architecture starting from the same randomly initialized circuit whose probability distribution has a TV value of $1.064$. We use a learning rate of $0.001$ and train each model until convergence. The TV metric value with respect to the number of samples per batch is shown in Fig \ref{fig: kernel}.

The performance of this scheme drops drastically as the sample batch size decreases, as reflected by the drastic increase in total variation. As the sample size decreases, some modes are experiencing the modal collapse issue, and the circuit assigns noisy probabilities to invalid patterns. Thus this scheme faces serious practical concerns, especially when one deals with much higher-dimensional data and larger probability space, which is often intractable as in the case of natural images.

\subsection{Experiment 3: Varying sample sizes with Born Machine+Adversarial Training+Non-saturating GAN loss}
We now repeat the experiment with the ``Born Machine+Adversarial Training+Non-saturating GAN loss'' scheme. Although GAN's are known to be hard to train, to our surprise, this scheme is much more robust and stable than expected. We update the Discriminator (D) twice for each Generator (G) update, We use a learning rate of 1e-3 for both the Discriminator (D) and the Generator (G). All models are trained with a learning rate of 1e-3, except for the case with batch size = 4, which uses a learning rate of 1e-4. The TV metric values with respect to the number of samples per batch is shown in Fig \ref{fig: vanilla}.

Quite different from the previous scheme, there does not seem to be a clear correlation between the total variation performance and the sample sizes per batch, as a smaller batch size might even enhance the training result. Another important observation is that adversarial training helps eliminate the generation of invalid (non-BAS) image patterns, which is reflected by the low noise level of the learnt distribution (orange line) in all the plots. However, serious modal collapse is present in all batch sizes, which confirms with the observation in \cite{adv_Born}, where the authors trained on the $3 \times 3$ BAS dataset with a batch size of $512$ (equal to the total number of patterns produced by $9$ qubits).

\subsection{Experiment 4: Two-step fine-tuning scheme}
How can we improve the training result, i.e., how to decrease the noise in the learnt distribution, meanwhile avoiding modal collapse? We propose a two-step training and fine-tuning scheme, utilizing the discoveries in the previous two experiments. Namely, we first train the adversarial scheme (Born Machine+Adversarial Training+Non-saturating GAN loss) until convergence, which ensures a good starting point where invalid patterns are avoided. We then fine-tune the circuit with the Gaussian kernel (Born Machine+Gaussian Kernel+MMD loss) scheme which does not involve a neural network Discriminator. We herein showcase the effectiveness of this fine-tuning scheme by selecting the worst-learnt circuit in Experiment 3., which is the case with a batch size=64 with TV=0.502. We fine-tune the circuit with the Gaussian kernel, with a learning rate of 1e-4. After another $2000$ iterations, the model arrives at a configuration that produces a TV=0.089, which is the best performance we have ever witnessed (Fig \ref{fig: finetune}), even outperforming the Gaussian Kernel+MMD loss scheme with the infinite samples case.

\subsection{Experiment 5: Born Machine+Adversarial training+MCR loss}
We now experiment with our proposed MCR loss function in the adversarial training setting. We use different batch sizes and investigate the modal collapse as well as the noise issue encountered before. We discovered that the MCR loss function is not only more computationally heavy, but is also less stable than the non-saturating GAN loss. The learnt models are shown in Fig \ref{fig: mcr}. Although the MCR loss does not completely solve the modal collapse issue, the learning result is superior than that of the non-saturating GAN loss in terms of both TV and modal collapse (but it does not surpass the fine-tuning method). The Discriminator does map the real and synthesized data into almost orthogonal linear subspaces at the early stage, and then the quantum generator learns to assign higher probabilities to the valid patterns in order to align the synthesized subspace to the real subspace (Fig \ref{fig: subspace}). Additionally, a larger sample size does result in better learning in terms of the TV score.

\section{Conclusion and Future Work} \label{conclusion}
In this study, we first reviewed the proof that MPQCs can output probability distributions that cannot be efficiently simulated by classical means, unless the Polynomial Hierarchy collapses to the third level. We organized the proof in a detailed and intuitive manner.

Then, to experimentally showcase the power of MPQC over classical NNs, we systematically studied the performance of MPQC-based quantum generative models under limited sample sizes and experimentally showed that quantum circuits are indeed better than the GAN+Gumbel Softmax counterpart at learning discrete data. We confirmed that the RBF (Gaussian) Kernel+MMD loss scheme performs poorly when sample sizes are much smaller than the size of the probability space, which is unfortunately often the case for natural image learning. We discovered that adversarial training with the non-saturating GAN loss is stable and robust against limited sample sizes, and it helps eliminate the generation of invalid patterns, but meanwhile suffering from modal collapse. We then proposed a sample-efficient, two-step training-and-fine-tuning scheme that first uses the quantum-classical GAN to land on a good intermediate result (eliminating invalid patterns), and then fine-tunes the circuit with the Gaussian Kernel+MMD loss scheme. This two-step training scheme achieves the best ever result among all models considered in this study. Lastly, we experimented a novel information-theoretic MCR loss function and discovered that it does not completely solve the modal collapse issue, although it outperforms the non-saturating GAN loss.

Future work involves dissecting the failing modes of the MCR loss function (stucking at local minima, ill-conditioned scenarios, etc) and coming up with a more efficient training algorithm specific for MCR. 

\newpage
\appendix
\appendixpage
\section{Figures}
\begin{figure}
    \centering
    \includegraphics[scale=0.2]{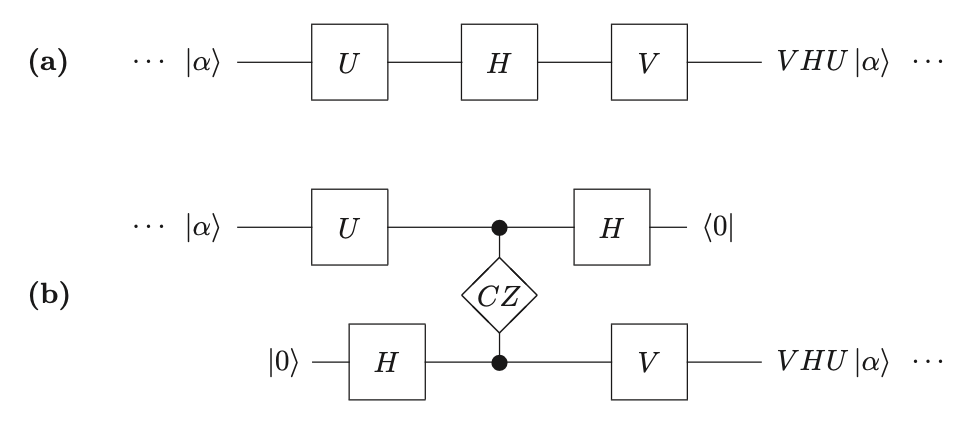}
    \caption{The Hadamard gadget for ``teleporting'' qubit A (top line) to qubit B (bottom line). Image from \cite{Bremner_PH}.}
    \label{fig:H_gadget}
\end{figure}

\begin{figure}[h]
    \includegraphics[scale=0.5]{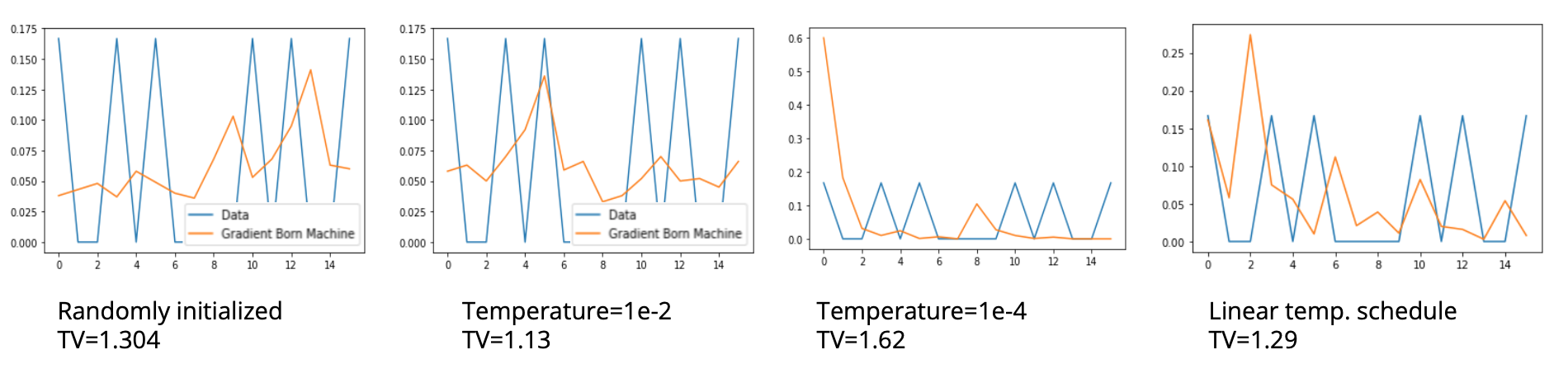}
    \caption{Classical GAN+Gumbel with various temperature settings}
    \label{fig: gumbel}
\end{figure}

\begin{figure}[h]
    \includegraphics[scale=0.4]{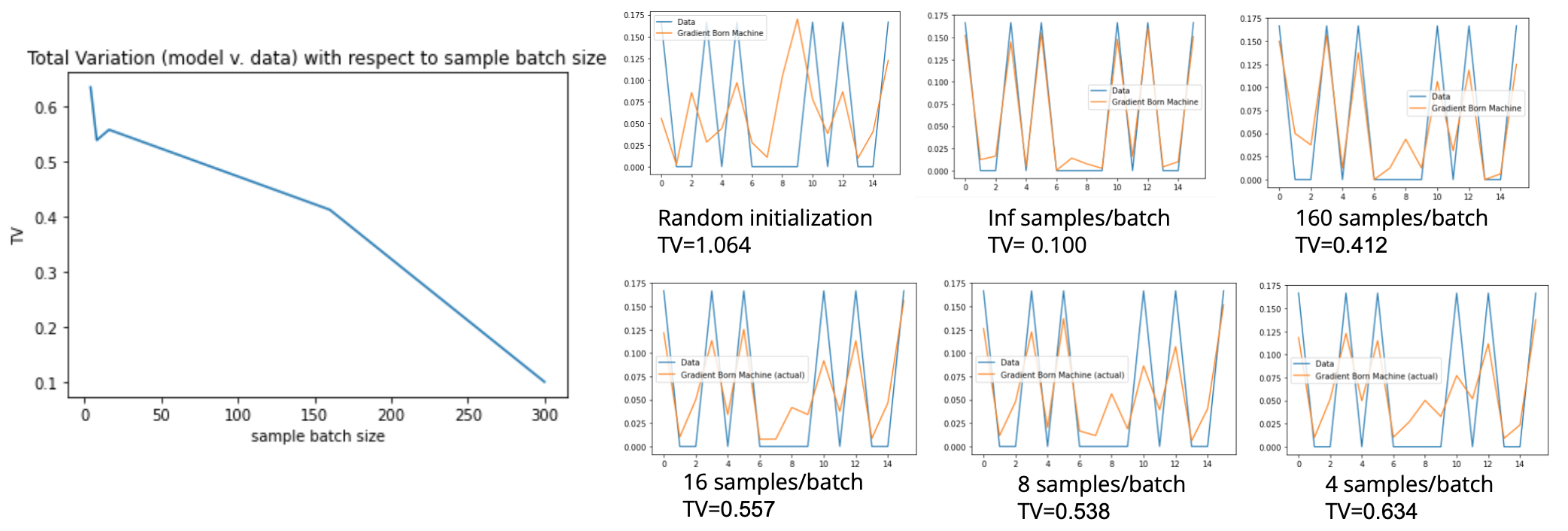}
    \caption{Born Machine+Gaussian Kernel+MMD loss with various batch sizes. Notice how the learning quality decreases drastically with smaller batches and more noises start to occur.}
    \label{fig: kernel}
\end{figure}

\begin{figure}[h]
    \includegraphics[scale=0.4]{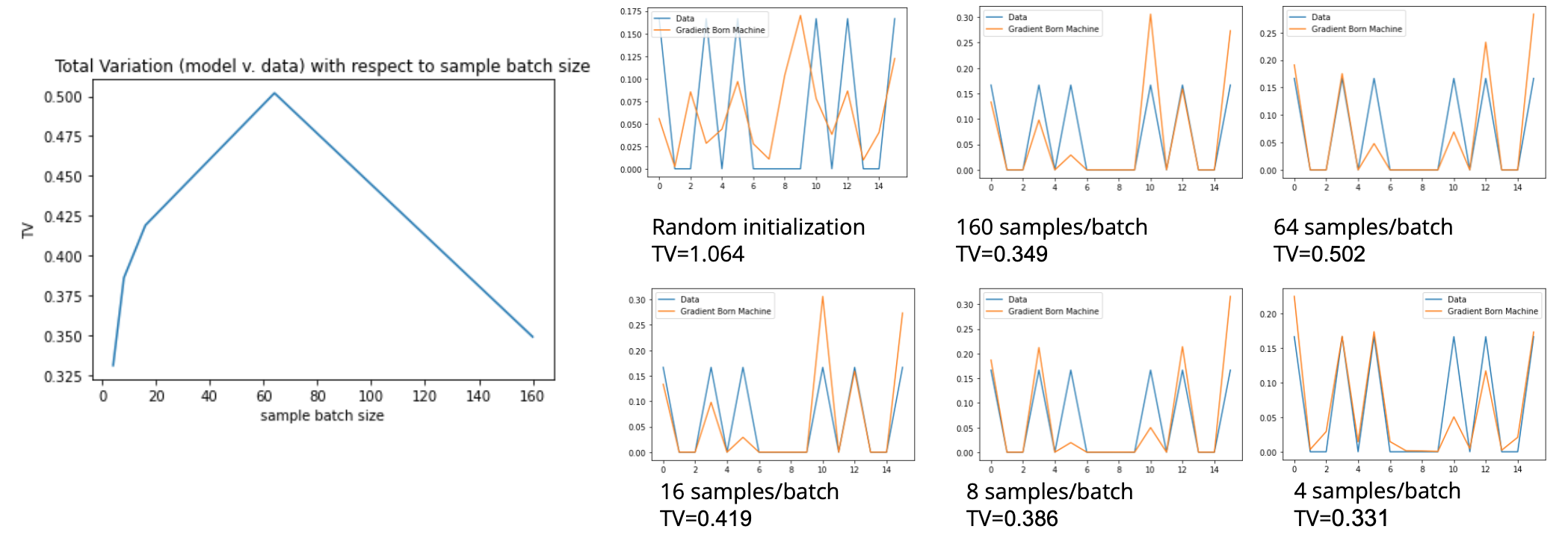}
    \caption{ Born Machine+Adversarial
Training+Non-saturating GAN loss. The learning results are not noisy, but seriously mode collapses occur at all batch size levels.}
    \label{fig: vanilla}
\end{figure}

\begin{figure}[h]
    \includegraphics[scale=0.4]{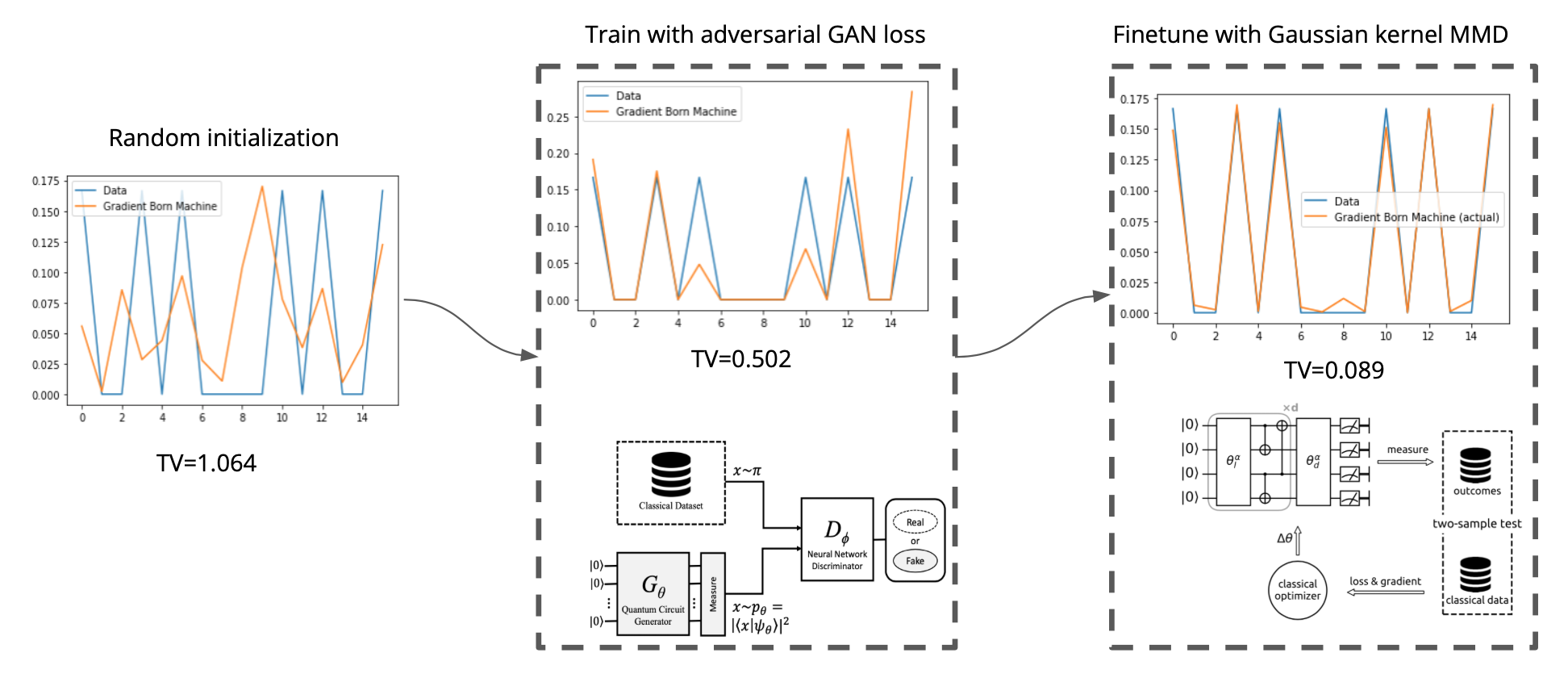}
    \caption{Fine-tuning the adversarially learnt quantum circuit with the Gaussian kernel+MMD loss can effectively fix the modal collapse issue.}
    \label{fig: finetune}
\end{figure}

\begin{figure}[h]
    \includegraphics[scale=0.5]{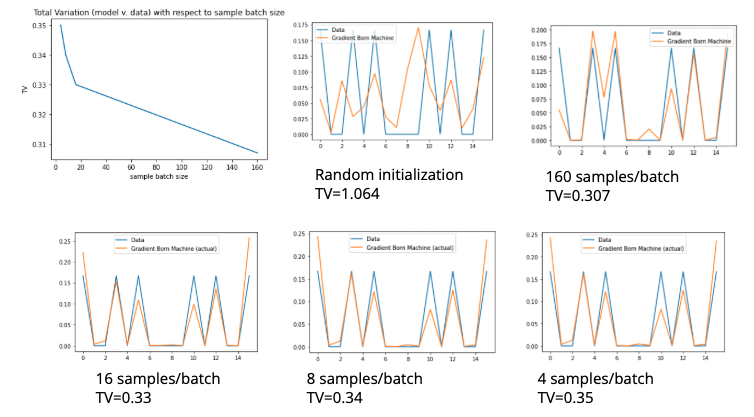}
    \caption{Born Machine+Adversarial training+MCR loss. Notice that the learning results take a middle ground between modal collapse and noisiness.}
    \label{fig: mcr}
\end{figure}

\begin{figure}[h]
    \includegraphics[scale=0.55]{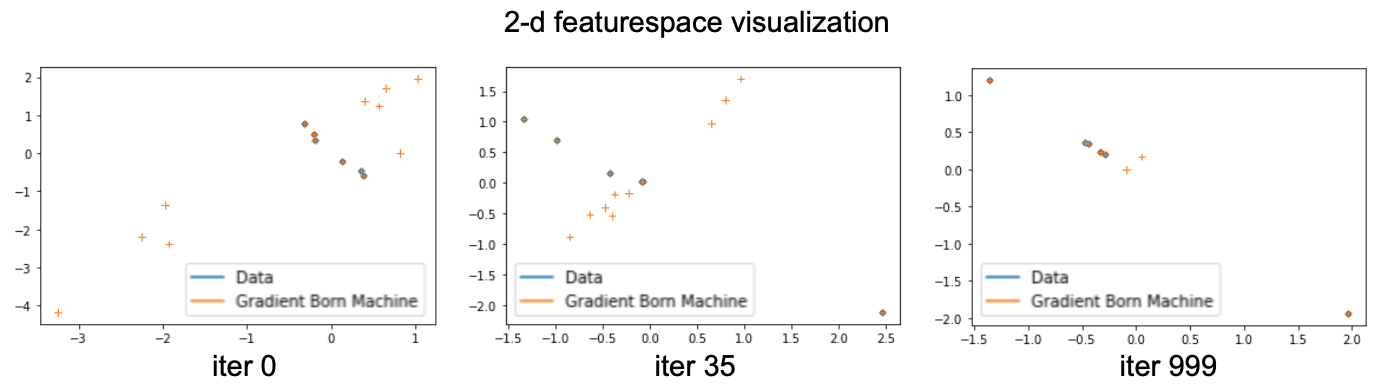}
    \caption{Visualization of the feature space. At early training stages, the Discriminator maps the real and synthesized data points into almost orthognoal linear subspaces. The quantum Generator then learns to align with the real data in later stages.}
    \label{fig: subspace}
\end{figure}

\begin{figure}[h]
    \includegraphics[scale=0.5]{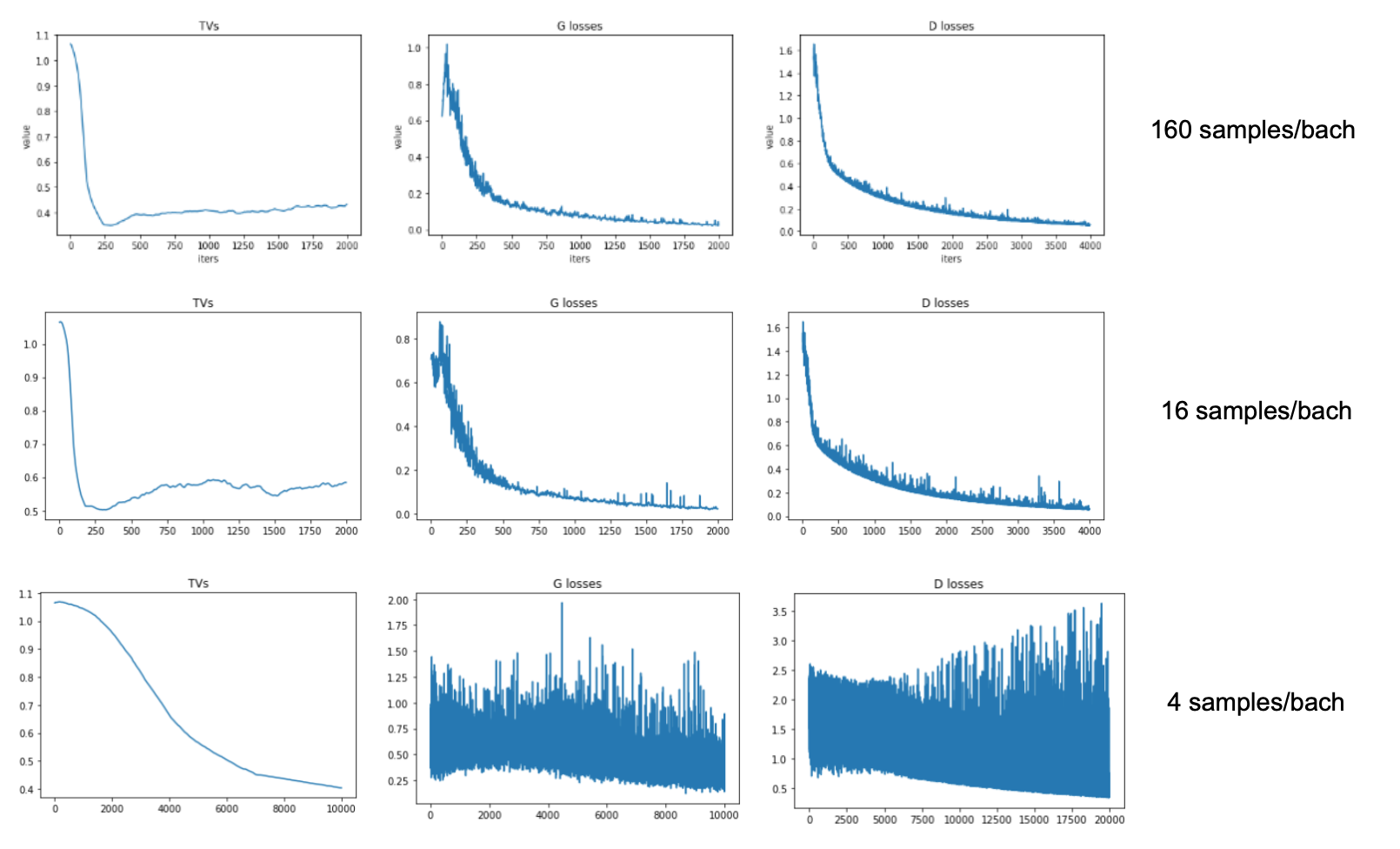}
    \caption{The G and D losses for the adversarially trained Born Machine with non-saturating GAN loss. Notice that the TV converges, yet the G and D losses experience larger fluctuations as the batch sizes decrease, which can be a potential cause of traning instability.}
    \label{fig: loss_curves}
\end{figure}

\begin{figure}[h]
    \includegraphics[scale=0.5]{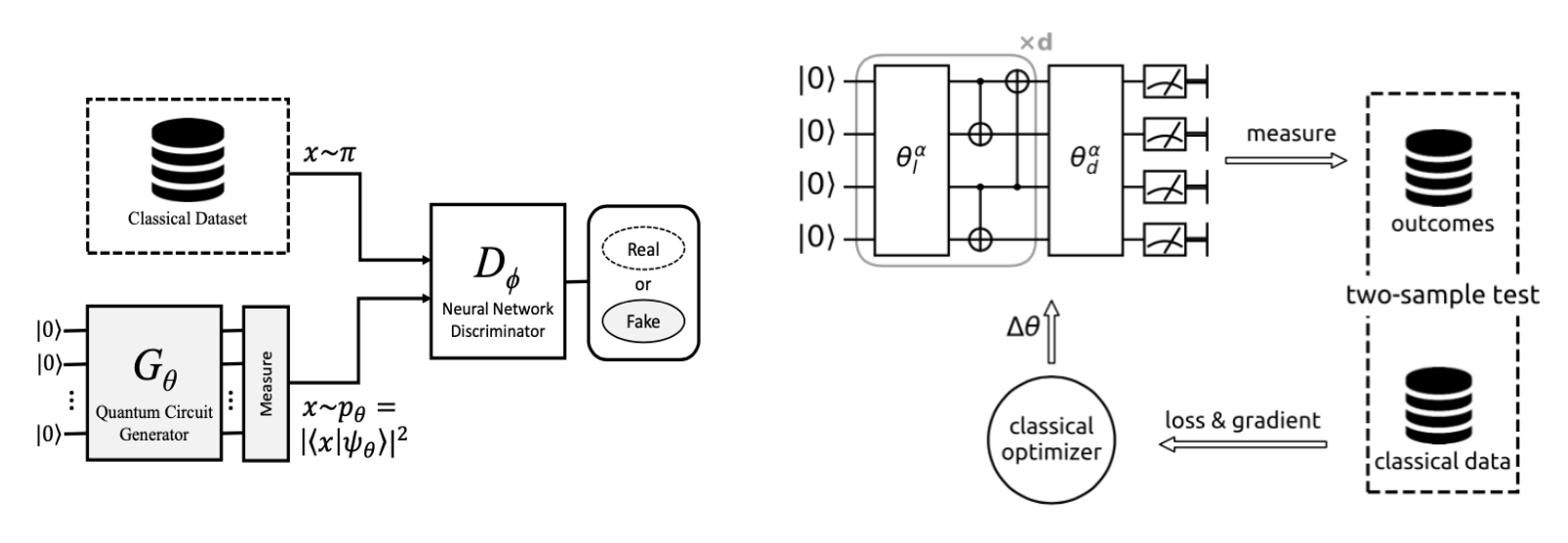}
    \caption{The Born Machine+Discriminator and Born Machine+Gaussian Kernel architectures, images taken from \cite{adv_Born} and \cite{dif_born}}
    \label{fig: architecture}
\end{figure}

\null\newpage 

\section{MCR Quantum Gradient Derivations (Neural Network Version)}
In the case when we have explicit mapping  from data space to feature space, i.e., $Z = \phi(X)\in \mathbb{R}^{n\times m}$, where $\phi(\cdot)$ can be a neural network parameterized by $\theta$. To avoid confusion, in this section I denote the feature matrix $Z=f_{\theta}(X)\in \mathbb{R}^{n\times m}$. Not requiring the inner product form now, the quantum gradient is computationally less expensive. Given Eq. (\ref{infinite}), which is replicated here:
\begin{equation}
    \begin{split}
        \Delta \lim_{m\rightarrow \infty} R([\phi(\X),\phi(\Y)])=&\underbrace{\frac{1}{2}\log\det(\I + \frac{d}{2\epsilon^2}\sum_{\x} p_{\boldsymbol{\theta}}(\x)\phi(\x)\phi(\x)^T+\frac{d}{2\epsilon^2}\sum_{\y} p_{\Y}(\y)\phi(\y)\phi(\y)^T)}_{\mathcal{L}_A}\\
        &-\underbrace{\frac{1}{4}\log\det(\I + \frac{d}{\epsilon^2}\sum_{\x} p_{\boldsymbol{\theta}}(\x)\phi(\x)\phi(\x)^T)}_{\mathcal{L}_B}-\frac{1}{4}\log\det(\I + \frac{d}{\epsilon^2}\sum_{\y} p_{\Y}(\y)\phi(\y)\phi(\y)^T)\\
        &= \mathcal{L},
    \end{split}
\end{equation}
and directly using the fact that $\frac{\partial \log\det(A)}{\partial A}=A^{-1}$, we have:
\begin{equation}
    \begin{split}
        \frac{\partial \mathcal{L}_A}{\partial p_\theta (\hat{\x})}=\frac{1}{2}<(\I + \frac{d}{2\epsilon^2}\sum_{\x} p_{\boldsymbol{\theta}}(\x)\phi(\x)\phi(\x)^T+\frac{d}{2\epsilon^2}\sum_{\y} p_{\Y}(\y)\phi(\y)\phi(\y)^T)^{-1}, \phi(\hat{\x})\phi(\hat{\x})^\top>,
    \end{split}
\end{equation}
and similarly:
\begin{equation}
    \begin{split}
        &\frac{\partial \mathcal{L}_B}{\partial p_\theta (\hat{\x})}=\frac{1}{4}<(\I + \frac{d}{\epsilon^2}\sum_{\x} p_{\boldsymbol{\theta}}(\x)\phi(\x)\phi(\x)^T)^{-1}, \phi(\hat{\x})\phi(\hat{\x})^\top>,
    \end{split}
\end{equation}
where $<A,B>$ denotes the element-wise dot product between matrices $A$ and $B$.

Therefore,
\begin{equation}
    \begin{split}
        \frac{\partial \mathcal{L}_A}{\partial \hat{\theta}}&=\sum_{\x} \frac{\partial \mathcal{L}_A}{\partial p_{\boldsymbol{\theta}}(\x)}\frac{\partial p_{\boldsymbol{\theta}}(\x)}{\partial \hat{\theta}}=\sum_{\x} \frac{\partial \mathcal{L}_A}{\partial p_{\boldsymbol{\theta}}(\x)}\frac{p_{\theta^+}(\x)-p_{\theta^-}(\x)}{2}\\
        &=\frac{1}{4}\mathbb{E}_{\hat{\x}\sim p_{\theta^+}}<(\I + \frac{d}{2\epsilon^2}\sum_{\x} p_{\boldsymbol{\theta}}(\x)\phi(\x)\phi(\x)^T+\frac{d}{2\epsilon^2}\sum_{\y} p_{\Y}(\y)\phi(\y)\phi(\y)^T)^{-1}, \phi(\hat{\x})\phi(\hat{\x})^\top>\\
        &-\frac{1}{4}\mathbb{E}_{\hat{\x}\sim p_{\theta^-}}<(\I + \frac{d}{2\epsilon^2}\sum_{\x} p_{\boldsymbol{\theta}}(\x)\phi(\x)\phi(\x)^T+\frac{d}{2\epsilon^2}\sum_{\y} p_{\Y}(\y)\phi(\y)\phi(\y)^T)^{-1}, \phi(\hat{\x})\phi(\hat{\x})^\top>,
    \end{split}
\end{equation}
and
\begin{equation}
    \begin{split}
        \frac{\partial \mathcal{L}_B}{\partial \hat{\theta}}&=\sum_{\x} \frac{\partial \mathcal{L}_B}{\partial p_{\boldsymbol{\theta}}(\x)}\frac{\partial p_{\boldsymbol{\theta}}(\x)}{\partial \hat{\theta}}=\sum_{\x} \frac{\partial \mathcal{L}_B}{\partial p_{\boldsymbol{\theta}}(\x)}\frac{p_{\theta^+}(\x)-p_{\theta^-}(\x)}{2}\\
        &=\frac{1}{8}\mathbb{E}_{\hat{\x}\sim p_{\theta^+}}<(\I + \frac{d}{\epsilon^2}\sum_{\x} p_{\boldsymbol{\theta}}(\x)\phi(\x)\phi(\x)^T)^{-1}, \phi(\hat{\x})\phi(\hat{\x})^\top>\\
        &-\frac{1}{8}\mathbb{E}_{\hat{\x}\sim p_{\theta^-}}<(\I + \frac{d}{\epsilon^2}\sum_{\x} p_{\boldsymbol{\theta}}(\x)\phi(\x)\phi(\x)^T)^{-1}, \phi(\hat{\x})\phi(\hat{\x})^\top>,
    \end{split}
\end{equation}

\section{MCR Quantum Gradient Derivations (Kernel Version)}
In the simplest case (discrete and compact distributions $ p_X$ and $ p_Y$), given two groups of samples, $x\sim p_X$ and $y\sim p_Y$, and if we treat each sample as a column vector and construct $\X$ and $\Y$, both are $\mathcal{R}^{d\times m}$, where $d$ is the dimension of features and $m$ is the batch size, we define the MCR (second order) ``distance'' between $\X$ and $\Y$ as:
\begin{equation}
\label{sample}
    \begin{split}
        \Delta R([\X,\Y])=&\frac{1}{2}\log\det(\I + \frac{d}{2m\epsilon^2}\X\X^T+\frac{d}{2m\epsilon^2}\Y\Y^T)\\
        &-\frac{1}{4}\log\det(\I + \frac{d}{m\epsilon^2}\X\X^T)-\frac{1}{4}\log\det(\I + \frac{d}{m\epsilon^2}\Y\Y^T).
    \end{split}
\end{equation}
This metric is $0$ iff $\frac{1}{m}\X\X^T=\frac{1}{m}\Y\Y^T$, i.e., the sample covariance matrices (given the same batch sizes) are equal. 

For the case when the sample size $m\rightarrow \infty$, we have:
\begin{equation}
    \lim_{m\rightarrow \infty} \frac{d}{m\epsilon^2}\X\X^T=\frac{d}{\epsilon^2}\sum_{\x} p_{\X}(\x)\x\x^T.
\end{equation}

Therefore:
\begin{equation}
\label{infinite}
    \begin{split}
        \Delta \lim_{m\rightarrow \infty} R([\X,\Y])=&\frac{1}{2}\log\det(\I + \frac{d}{2\epsilon^2}\sum_{\x} p_{\X}(\x)\x\x^T+\frac{d}{2\epsilon^2}\sum_{\y} p_{\Y}(\y)\y\y^T)\\
        &-\frac{1}{4}\log\det(\I + \frac{d}{\epsilon^2}\sum_{\x} p_{\X}(\x)\x\x^T)-\frac{1}{4}\log\det(\I + \frac{d}{\epsilon^2}\sum_{\y} p_{\Y}(\y)\y\y^T).
    \end{split}
\end{equation}
In the case when $p_X$ and $p_Y$ are Gaussian distributions, matching the above objective (covariances) along with the first moment (mean) will be sufficient for showing $p_X=p_Y$. In practice, $p_X$ and $p_Y$ are intractable, so we can only approximate (\ref{infinite}) by (\ref{sample}) with large enough $m$.
\\

We can also compare the covariance of features in a RHKS kernel space via a feature mapping $\phi(\cdot)$. By the law of unconscious statistician, $\mathbb{E}_{\phi(\x)\sim p_{\phi(x)}}\phi(\x)\phi(\x)^T=\mathbb{E}_{\x\sim p_{\X}}\phi(\x)\phi(\x)^T$, so the covariance matrix in kernel feature space can be written as the expectation over $p_{\X}$:
\begin{equation}
\label{kernel_}
    \begin{split}
        \Delta \lim_{m\rightarrow \infty} R([\phi(\X),\phi(\Y)])=&\frac{1}{2}\log\det(\I + \frac{d}{2\epsilon^2}\sum_{\x} p_{\X}(\x)\phi(\x)\phi(\x)^T+\frac{d}{2\epsilon^2}\sum_{\y} p_{\Y}(\y)\phi(\y)\phi(\y)^T)\\
        &-\frac{1}{4}\log\det(\I + \frac{d}{\epsilon^2}\sum_{\x} p_{\X}(\x)\phi(\x)\phi(\x)^T)-\frac{1}{4}\log\det(\I + \frac{d}{\epsilon^2}\sum_{\y} p_{\Y}(\y)\phi(\y)\phi(\y)^T).
    \end{split}
\end{equation}
\\

Designating $p_{\X}$ to be the distribution that we want to parameterize, the variational quantum circuit uses a vector of parameters $\boldsymbol{\theta}$ to model $p_{\X}$:
\begin{equation}
\label{kernel_}
    \begin{split}
        \Delta \lim_{m\rightarrow \infty} R([\phi(\X),\phi(\Y)])=&\underbrace{\frac{1}{2}\log\det(\I + \frac{d}{2\epsilon^2}\sum_{\x} p_{\boldsymbol{\theta}}(\x)\phi(\x)\phi(\x)^T+\frac{d}{2\epsilon^2}\sum_{\y} p_{\Y}(\y)\phi(\y)\phi(\y)^T)}_{\mathcal{L}_A}\\
        &-\underbrace{\frac{1}{4}\log\det(\I + \frac{d}{\epsilon^2}\sum_{\x} p_{\boldsymbol{\theta}}(\x)\phi(\x)\phi(\x)^T)}_{\mathcal{L}_B}-\frac{1}{4}\log\det(\I + \frac{d}{\epsilon^2}\sum_{\y} p_{\Y}(\y)\phi(\y)\phi(\y)^T)\\
        &= \mathcal{L}
    \end{split}
\end{equation}
\\

Now we derive the quantum gradient of the above metric with respect to a specific single quantum circuit parameter $\hat{\theta}\in \boldsymbol{\theta}$. First note, by the structure of the above loss function:
\begin{equation}
    \frac{\partial \mathcal{L}}{\partial \hat{\theta}}=\sum_{\x} \frac{\partial \mathcal{L}}{\partial p_{\boldsymbol{\theta}}(\x)}\frac{\partial p_{\boldsymbol{\theta}}(\x)}{\partial \hat{\theta}},
\end{equation}
where 
\begin{equation}
    \frac{\partial p_{\boldsymbol{\theta}}(\x)}{\partial \hat{\theta}}=\frac{1}{2}\left(p_{\theta^{+}}(\x)-p_{\theta^{-}}(\x)\right).
\end{equation}

There are \textbf{two ways} of expressing $\frac{\partial \mathcal{L}}{\partial p_{\boldsymbol{\theta}}(\x)}$: one taking less computation, but explicitly needs the expression for $\phi(\cdot)$, and one taking more computation, but can be written in \textbf{dot product form} in the kernel space.
\\

To be able to use universal kernels, I first derive the \textbf{second formulation}, using the fact that:
\begin{equation}
    \begin{split}
      \frac{d}{d x} \log \operatorname{det} A(x)= \frac{d}{d x} \operatorname{Tr} \log A(x)= \operatorname{Tr}\left(A(x)^{-1} \frac{d A}{d x}\right),
    \end{split}
\end{equation}
regardless whether matrices $\frac{d A}{d x}$ and $A(x)$ commute or not (add derivation to appendix).
\\

\begin{equation}
    \begin{split}
      2\frac{\partial \mathcal{L}_A}{\partial p_{\boldsymbol{\theta}}(\hat{\x})}&=\Tr \Big[\Big(\I + \frac{d}{2\epsilon^2}\sum_{\x} p_{\boldsymbol{\theta}}(\x)\phi(\x)\phi(\x)^T+\frac{d}{2\epsilon^2}\sum_{\y} p_{\Y}(\y)\phi(\y)\phi(\y)^T \Big)^{-1} \phi(\hat{\x})\phi(\hat{\x})^T \Big]\\
      &=\lim_{m\rightarrow \infty} \Tr \Big[\Big(\I + \frac{d}{2m\epsilon^2}\phi(\X)\phi(\X)^T+\frac{d}{2m\epsilon^2}\phi(\Y)\phi(\Y)^T \Big)^{-1} \phi(\hat{\x})\phi(\hat{\x})^T \Big]\\
      &=\lim_{m\rightarrow \infty} \Tr \Big[\Big(\I + \frac{d}{2m\epsilon^2}\underbrace{[\phi(\X), \phi(\Y)]}_{M}\underbrace{[\phi(\X), \phi(\Y)]^T}_{M^T} \Big)^{-1} \phi(\hat{\x})\phi(\hat{\x})^T \Big]\\
      &=\lim_{m\rightarrow \infty} \Tr \Big[\Big(\I - \frac{d}{2m\epsilon^2}M\Big(\I+\frac{d}{2m\epsilon^2} M^TM\Big)^{-1} M^T \Big) \phi(\hat{\x})\phi(\hat{\x})^T \Big]\\
      &=\lim_{m\rightarrow \infty} \Tr \Big[\phi(\hat{\x})\phi(\hat{\x})^T - \frac{d}{2m\epsilon^2}M\Big(\I+\frac{d}{2m\epsilon^2} M^TM\Big)^{-1} M^T  \phi(\hat{\x})\phi(\hat{\x})^T \Big]\\
      &=\lim_{m\rightarrow \infty} \Tr \Big[\phi(\hat{\x})\phi(\hat{\x})^T\Big] - \Tr\Big[ \frac{d}{2m\epsilon^2}M\Big(\I+\frac{d}{2m\epsilon^2} M^TM\Big)^{-1} M^T  \phi(\hat{\x})\phi(\hat{\x})^T \Big]\\
      &=\lim_{m\rightarrow \infty} \Tr \Big[\phi(\hat{\x})^T \phi(\hat{\x})\Big] - \Tr\Big[ \frac{d}{2m\epsilon^2}\phi(\hat{\x})^T M\Big(\I+\frac{d}{2m\epsilon^2} M^TM\Big)^{-1} M^T  \phi(\hat{\x}) \Big]\\
      &=\lim_{m\rightarrow \infty} \Big\{\phi(\hat{\x})^T \phi(\hat{\x}) - \frac{d}{2m\epsilon^2}\phi(\hat{\x})^T M\Big(\I+\frac{d}{2m\epsilon^2} M^TM\Big)^{-1} M^T  \phi(\hat{\x}) \Big\}
    \end{split}
\end{equation}
Then, the gradient of $\mathcal{L}_A$ is expressed as:
\begin{equation}
    \begin{split}
        2\frac{\partial \mathcal{L}_A}{\partial \hat{\theta}}&=\sum_{\x} \frac{\partial \mathcal{L}}{\partial p_{\boldsymbol{\theta}}(\x)}\frac{\partial p_{\boldsymbol{\theta}}(\x)}{\partial \hat{\theta}}=\lim_{m\rightarrow \infty}\sum_{\hat{\x}}  \Big[\phi(\hat{\x})^T \phi(\hat{\x})\Big]\frac{\partial p_{\boldsymbol{\theta}}(\x)}{\partial \hat{\theta}}\\
        &-\sum_{\hat{\x}} \Big[ \frac{d}{2m\epsilon^2}\phi(\hat{\x})^T M\Big(\I+\frac{d}{2m\epsilon^2} M^TM\Big)^{-1} M^T  \phi(\hat{\x}) \Big]\frac{\partial p_{\boldsymbol{\theta}}(\x)}{\partial \hat{\theta}}\\
        &=\frac{1}{2}\mathbb{E}_{\x\sim p_{\boldsymbol{\theta^+}}}\phi(\hat{\x})^T \phi(\hat{\x})-\frac{1}{2}\mathbb{E}_{\x \sim p_{\boldsymbol{\theta^-}}}\phi(\hat{\x})^T \phi(\hat{\x})\\
        &-\lim_{m\rightarrow \infty}\frac{1}{2}\Bigg\{\frac{d}{2m\epsilon^2}\mathbb{E}_{\x \sim p_{\boldsymbol{\theta^+}}}\phi(\hat{\x})^T M\Big(\I+\frac{d}{2m\epsilon^2} M^TM\Big)^{-1} M^T  \phi(\hat{\x}) \\
        &- \frac{d}{2m\epsilon^2}\mathbb{E}_{\x \sim p_{\boldsymbol{\theta^-}}}\phi(\hat{\x})^T M\Big(\I+\frac{d}{2m\epsilon^2} M^TM\Big)^{-1} M^T  \phi(\hat{\x}) \Bigg\}.\\
    \end{split}
\end{equation}
Using the similar procedure, we can see the gradient of $\mathcal{L}_B$ is:
\begin{equation}
    \begin{split}
        4\frac{\partial \mathcal{L}_B}{\partial \hat{\theta}}&=\frac{1}{2}\mathbb{E}_{\x\sim p_{\boldsymbol{\theta^+}}}\phi(\hat{\x})^T \phi(\hat{\x})-\frac{1}{2}\mathbb{E}_{\x \sim p_{\boldsymbol{\theta^-}}}\phi(\hat{\x})^T \phi(\hat{\x})\\
        &-\lim_{m\rightarrow \infty}\frac{1}{2}\Bigg\{\frac{d}{m\epsilon^2}\mathbb{E}_{\x \sim p_{\boldsymbol{\theta^+}}}\phi(\hat{\x})^T \phi(\X)\Big(\I+\frac{d}{m\epsilon^2} \phi(\X)^T \phi(\X)\Big)^{-1} \phi(\X)^T  \phi(\hat{\x}) \\
        &- \frac{d}{m\epsilon^2}\mathbb{E}_{\x \sim p_{\boldsymbol{\theta^-}}}\phi(\hat{\x})^T \phi(\X)\Big(\I+\frac{d}{m\epsilon^2} \phi(\X)^T \phi(\X)\Big)^{-1} \phi(\X)^T  \phi(\hat{\x}) \Bigg\}.\\
    \end{split}
\end{equation}

\bibliographystyle{plain}
\bibliography{references}

\end{document}